\newtheorem{problem}{Problem}
\newtheorem*{hypothesis}{Hypothesis}
\newtheorem{definition}{Definition}
\newtheorem{theorem}{Theorem}
\newtheorem{lemma}{Lemma}
\newtheorem{claim}{Claim}
\newtheorem{corollary}{Corollary}
\newcommand{\ignore}[1]{}
\newcommand{\paths}{\mathcal{\Pi}}
\newcommand{\w}{\mathcal{W}}
\newcommand{\C}{\mathfrak{C}}
\newcommand{\G}{\mathcal{G}}
\newcommand{\R}{\mathbb{R}}
\newcommand{\range}{\mathcal{R}}
\newcommand{\reach}{\mathcal{\, G_{\ge} }}
\newcommand{\subcat}{\mathcal{\, G_{\le} }}
\title{Hierarchical Categories in Colored Searching} 
\author{Peyman Afshani\thanks{Aarhus University, Denmark
    \texttt{peyman@cs.au.dk}. Supported by Independent Research Fund Denmark (DFF) grant ID DFF$-$7014$-$00404.}\and Rasmus Killmann\thanks{Aarhus
  University, Denmark \texttt{killmann@cs.au.dk}. Supported by Independent Research Fund Denmark (DFF) grant ID DFF$-$7014$-$00404.} \and Kasper Green
Larsen\thanks{Aarhus University, Denmark
  \texttt{larsen@cs.au.dk}. Supported by Independent Research Fund Denmark (DFF) Sapere Aude Research Leader grant No 9064-00068B.}
}
\begin{document}
\date{}
\maketitle

\thispagestyle{empty}

\begin{abstract}
    In colored range counting (CRC), the input is a set of points where each point is assigned
    a ``color'' (or a ``category'') and the goal is to store them in a data structure such
    that the number of distinct categories inside a given query range can be counted
    efficiently. 
    CRC has strong motivations as it allows data structure to deal with categorical data. 
    
    However, colors (i.e., the categories) in the CRC problem do not have any internal structure, whereas this is not the case for
    many datasets in practice where hierarchical categories exists or where
    a single input belongs to multiple categories. 
    Motivated by these,  we consider variants of the problem where such structures
    can be represented. 
    We define two variants of the problem called hierarchical range counting (HCC) and
    sub-category colored range counting (SCRC) and consider hierarchical structures
    that can either be a DAG or a tree. 
    We show that the two problems  on some special trees 
    are in fact equivalent to other well-known problems
    in the literature. 
    Based on these, we also give efficient data structures when the
    underlying hierarchy can be represented as a tree. 
    We show a conditional lower bound for the general case when the existing hierarchy can be any DAG, 
    through reductions from the orthogonal vectors problem. 
\end{abstract}

\section{Introduction}\label{sec:intro}
Range searching is a broad area of computational geometry where the goal is to store a given set $P$ of
input points in a data structure such that one can efficiently report (\textit{range reporting}) or count 
(\textit{range counting}) the points inside a geometric query region $\range$.
Sometimes, each point $p_i \in P$ is associated with a weight $w_i \in \R$ and the goal could be to find the sum
of the weights in $\range$, or the maximum weight in $P\cap \range$
(\textit{range max} problem). 
This is a very broad area of research and there are many well-studied variants. 
See a recent excellent survey by Agarwal for more information~\cite{Agarwal.survey16}.

\textit{Colored} (or \textit{categorical}) range searching is an important variant where
each input point is associated with a \textit{category} which conceptually is represented as a color;
the goal of the query is then to report or count the number of distinct colors inside the query range. 
Colored range searching has strong motivations since colors allow us to represent 
\textit{nominal attributes} such as brand, manufacturer and so on and in practice a data set often 
contains a mix of nominal and ordinal attributes. 

Colored range searching was introduced in 1993 by Janardan and Lopez~\cite{colored} and it 
has received considerable attention since then.
However, classical colored range searching only models a ``flat'' categorical structure where categories
have no inherent structure. 
We consider variants of colored range counting to model such structures. 
We show that looking at colored range searching from this angle creates a number of interesting
questions with non-trivial connections to other already existing problems.

\subsection{Problem Definitions and Motivations}
We begin by formally defining colored range counting. 

\begin{problem}[colored range counting]\label{prob:crc}
  Given an input set $P$ of $n$ points in $\R^d$, a set $C$ of colors (i.e., categories) and
  a function $\C: P \rightarrow C$,  store them in a data structure
  such that given a query range $\range \subset \R^d$, it can count the number of distinct colors in $\range$,
  i.e., the value  $|C_\range|$ where $C_\range = \left\{
  \C(p) \mid p \in P \cap \range \right\}$.

  In the weighted version, input also includes a weight function $\w: C \rightarrow \R$
  and the output of the query is the weighted sum of the distinct colors in $\range$,
  i.e., the value $\sum_{c \in C_\range} \w(c)$.
\end{problem}

Colored range searching assumes that the colors are completely unstructured and that each
point receives exactly one color. 
However, these assumptions can be inadequate as hierarchical categories are quite common. 
For example, biological classification of living organisms are done through a tree where inclusion
in a category implies inclusion in all the ancestor categories. 
In fact, similar phenomena happen with respect to most notions of classification (e.g., 
    classification of industries). 
In other scenarios, a point may have multiple categories (e.g., a car can have a ``brand'', a ``color'',
``fuel type'' and so on). While it is possible to view the set of categories assigned to a point as one single
category, doing so ignores a lot of structure. For example, ``a blue diesel car'' is both a ``blue car'' and
also a ``diesel car'' but by considering ``blue diesel'' as a single category, this information is lost. 
We believe it is worthwhile to study the notion of structures within categories from a theoretical point of view. 
We are not in fact the first in trying to do so and we will shortly discuss some of the previous attempts.

A natural way to represent hierarchical categories is to assume that vertices of a DAG $\G$ represent the set of categories
where an edge $\vec{e}=(u,v)$ from (a category) $u$ to (a category) $v$ means that 
$u$ is a sub-category of $v$. 
We call $\G$ a \textit{category DAG} (or a \textit{category tree} if it is a tree). 
For a vertex $v \in \G$, 
we define $\subcat(v)$ as the subset of vertices of $\G$ that can reach $v$ (i.e., ``sub-categories'' of $v$),
$\reach(v)$ as the subset of vertices of $\G$ that $v$ can reach (i.e., ``super-categories'' of $v$).
Similarly, for a subset $H \subset V(G)$ we define $\reach(H) = \cup_{v \in H}\reach(v)$, and 
$\subcat(H) = \cup_{v \in H}\subcat(v)$.

Category trees
allows us to represent hierarchical categories.
Category DAGs allow us to capture cases where points can have multiple categories. 
Consider the car example again.
We can define a category DAG $\G$ where a vertex $u\in \G$ represents the compound category $\left\{\mbox{diesel}, \mbox{blue}  \right\}$
with edges to vertices $d$ and $b$ that represent ``diesel'' and ``blue'' categories respectively.
Thus, the set $\subcat(d)$  represents all the ``diesel'' cars and it includes the category $u$, the ``blue diesel'' category,
and similarly, the set $\subcat(b)$ represents all the ``blue'' cars which also includes the category $u$, the ``blue diesel'' category.
Likewise, $\reach(u)$ includes both $b$ and $d$ since a ``blue diesel'' car is both a ``blue car'' and a ``diesel car''.

We revisit colored range searching problems, using concepts of category DAGs or trees.

\begin{problem}[sub-category range counting (SCRC)]\label{node_problem}
  Consider an input point set $P \subset \R^d$ of $n$ points, a DAG $\G$ with $O(n)$ edges, and a function $\C: P \rightarrow \G$.
  The goal is to store the input in a data structure, such that given a query that consists of a query range $\range \subset \R^d$ and a 
  query vertex $v_q \in \G$ it can output $|C_\range \cap \subcat(v_q)|$ where $C_\range = \left\{
  \C(p) \mid p \in P \cap \range \right\}$.
\end{problem}

\begin{problem}[hierarchical color counting (HCC)]\label{ancestor_problem}
  Consider an input point set $P \subset \R^d$ of $n$ points, a DAG $\G$ with $O(n)$ edges, and a function $\C: P \rightarrow \G$.
  The goal is to store the input in a data structure, such that given a query range $\range \subset \R^d$ one can
  output $|G_\range|$ where $G_\range$ is the set of colors in $\range$, i.e., 
  $G_\range= \bigcup_{p\in \range \cap P} \reach(\C(p))$.

  In the weighted version of the problem, each vertex $v$ (i.e., category) of $\G$ is associated with a weight
  $w(v)$ and given the query $\range$, the goal is to compute $\sum_{v \in G_\range} w(v)$ instead.
\end{problem}

\subparagraph{Related problems.} Very recently, there have been other attempts to consider the structure of ``colors'' within the computational
geometry community, e.g., He and Kazi~\cite{Meng.cpm21} consider a problem very similar to SCRC
on a tree $\G$; the only difference is that instead of $|C_\range \cap \subcat(v)|$, the query outputs
$|C_\range \cap \pi(v,w)|$ where $\pi(v,w)$ is a path between two query vertices $v, w \in \G$.
There are also other variants available. See~\cite{Meng.cpm21} for further references. 

\subsection{Previous and Other Related Results}
The study of colored range counting and its variants began in 1993~\cite{colored} and since then it has 
received considerable attention. See the survey on colored range searching and its variants~\cite{GuJaSm}. 
The problem has at least three main variants: color range counting (CRC),
color range reporting, and ``type 2'' color range counting (for every distinct
color, count how many times it appears). 
Here, we only review colored range counting results.

In 1D, one can solve the CRC problem using $O(n)$ space and $O(\log n)$ query time by an elegant and simple
transformation~\cite{Coloured_reporting_counting_3sided} which turns the problem into the unweighted 
2D orthogonal range counting problem which can be solved within said bounds~\cite{Chazelle.functional}. 
Interestingly, it is also possible to show an equivalence between the two problems~\cite{Kasper.Freek13}. 
The problem, however, is difficult for 2D and beyond. 
Kaplan et al.~\cite{Kaplanetal11} showed that answering $m$ queries on a set of $n$ points requires
$\Omega(n^{\omega/2-o(1)})$ time where $\omega$ is the boolean matrix multiplication exponent. 
Under some assumptions (e.g., the boolean matrix multiplication conjectures), this shows that
$P(n) + mQ(n) \ge n^{3/2-o(1)}$ where $P(\cdot)$ and $Q(\cdot)$ are the preprocessing time and the
query time of any data structure that solves the 2D CRC problem.

Note that the equivalence between 1D CRC and 2D range counting also applies
to the weighted case of both problems, however,
the status of the weighted 2D range counting is still unresolved.
It can be solved with $O(n \log n/\log\log n)$ space and $O(\log n/\log\log n)$ query time~\cite{Jaja.ISAAC04} but it is not known
if both space and query time can be improved simultaneously (it is possible to improve one at the expense of the other).
The only available lower bound is a query time lower bound of
$\Omega(\log n / \log\log n)$~\cite{patrascu08structures}.

Some other interesting problems related to our results are defined below. 

\begin{definition}
    The following problems are defined for an input that consists of a 
    set $P \subset \R^d$ of $n$ points.
    The goal is to build a data structure to answer the following queries. 
    \begin{itemize}
        \item (orthogonal range counting) Given a query axis-aligned rectangle $\range$,
            count the number of points in $\range$.
            In the \textit{weighted} version, the points have weights and the goal is to compute
            the sum of the weights in the query. 
        \item (dominance range counting) This is a special case of orthogonal range counting where the query
            rectangle has the form $(-\infty, q_1] \times \cdots (-\infty, q_d]$ which is also known as a 
            \textit{dominance range}. Orthogonal range counting and dominance range counting are known to be equivalent
            if subtraction of weights are allowed (e.g., integer weights).
        \item (3-sided color counting). The input is in the plane ($d=2$) and each point is assigned
            a color from a set $C$ and the query is a 3-sided range in the form of $\range = [q_\ell, q_r] \times (-\infty, q_t]$ and the goal is
            to count the number of colors in $\range$. 
            In the weighted version, the colors have weights and the goal is to compute the sum
            of the weights of the colors. 
        \item (3-sided distinct coordinate counting) This is a special case of 3-sided color counting where
            an input point $(x_i,y_i)$ has color $y_i$; in other words, given the query $\range = [q_\ell, q_r] \times (-\infty, q_t]$,
            we would like to count the number of distinct $Y$-coordinates inside it. 
        \item (range max) Given a weight function $\w: P \rightarrow \R$ as part of the input, at the query time we would like to find
            the maximum weight inside a given query range $\range$.
        \item (sum-max color counting) This a combination of range max and color counting queries. 
            Assume the points in $P$ have been assigned colors from a set $C$
            and assume we have a weight function
            $\w: P \rightarrow \R$  on the points. Given a query range $\range$, we would like to compute the 
            output $\sum_{c \in C}X_c(\range)$ where $X_c(\range)$ is the maximum weight of a point with color $c$
            inside $\range$; if no point of color $c$ exists in $\range$, then $X_c(\range) = 0$.

            A sum-max color counting query includes a number of the above problems as its special case:
            If all points have the same weight, then it reduces to a color counting query. 
            If all points have the same color, then it reduces to a range max query. 
            If all points have distinct colors, then it reduces to a weighted range counting query. 
    \end{itemize}
\end{definition}

\subsection{Our Results}
Clearly, sub-category range counting (SCRC) is at least as hard as CRC. 
We also observe that hierarchical color counting (HCC) is also at least as hard.
Thus, getting efficient results for $d \ge 2$ seems hopeless. 
Consequently, we focus on the 1D problem but for two different important DAG's: when $\G$ is
a tree and also when $\G$ is an arbitrary (sparse) DAG. 
Our main results are the following. 

For the SCRC problem, 
first, we observe that the following problems are equivalent:
\begin{enumerate}
    \item SCRC when $\G$ is a single path on a one-dimensional input $P$.
    \item 3-sided distinct coordinate counting (for a planar point set $P$).
    \item 3-sided color counting (for a planar point set $P$).
    \item 3D dominance color counting (for a 3D point set $P$).
    \item 3D dominance counting (for a 3D point set $P$).
\end{enumerate}
We start by observing that (1) and (2) are equivalent. 
It is also clear that (2) reduces to (3); the reduction from (3) to (4) is standard
by mapping a 2D input point $(x_i,y_i)$ to the 3D point $(-x_i,y_i,x_i)$ and the 3-sided
query range $\range = [q_\ell, q_r] \times (-\infty, q_t]$ to the 3D dominance
range $(-\infty,-q_\ell]\times (-\infty,q_t] \times   (-\infty,q_x]$.
The reduction from (4) to (5) was shown 
by Saladi~\cite{Saladi_2021}.
We complete the loop by observing that (5) in turn reduces to (2). 
Note that the weighted versions of the problems are also equivalent by following the same reductions. 
Next, we show that SCRC can be solved using $O(n \log^2 n/\log\log n)$ space
and with query time of $O(\log n / \log\log n)$ on any category tree $\G$;
our query time is optimal which follows from the above reductions and using known results~\cite{patrascu08structures}.

For the HCC problem on 
trees, we show that (weighted) HCC in 1D can be solved using a 2D (weighted) range counting data structure
on $O(n\log n)$ points.
For example, this yields a solution with $O(n \log^2 n/\log\log n)$ space and with $O(\log n/\log\log n)$ query time. 
Interestingly, we show that the following problems are in fact equivalent:
\begin{itemize}
  \item Unweighted HCC in 1D when $\G$ is a (generalized) caterpillar graph.
  \item Weighted 2D range counting with $\Theta(\log n)$ bit long integer weights.
  \item 1D Colored range sum-max.
\end{itemize}
These reductions are quite non-trivial and they show a surprising  equivalence between
an unweighted problem (HCC) and the weighted 2D  range counting.
This allows us to directly apply known lower bounds or barriers for 2D range counting.
First, there is an $\Omega(\log n /\log\log n)$ lower bound~\cite{patrascu08structures} for 2D range counting with near-linear
space ($O(n\log^{O(1)}n)$ space) and by the above reductions, the same
bound also applies to unweighted HCC in 1D.

When $\G$ can be any arbitrary sparse DAG, the problems become more complicated. 
By a reduction from the orthogonal vectors problem, we show that we must either have
$\Omega(n^{2-o(1)})$ preprocessing time or the query time must be almost linear $\Omega(n^{1-o(1)})$ and
this holds for both SCRC and HCC.
Surprisingly, for the HCC problem, we can build a data structure that has $O(\log n)$ query time
using $\tilde{O}(n^{3/2})$ space, even though the data structure requires $\tilde{O}(n^{2})$ preprocessing time. 
This is one of the rare instances where there is a polynomial gap between the space complexity and the
preprocessing time of a data structure.

\section{Technical preliminaries}\label{sec:tech}
A fundamental technique to decompose trees into paths with certain properties
is called the \textit{heavy path decomposition}. The technique was originally used as
part of the amortized analysis of the link/cut trees introduced by Sleator and
Tarjan~\cite{link_cut_tree} and later used in the data structure construction
for lowest common ancestor by Harel and Tarjan~\cite{lowest_common_ancestor}.
The decomposition is simple and gives us the following properties.

\begin{theorem}
    Given a tree T of size $O(n)$, we can partition the (vertices of the) tree
    into a set of paths
$\pi_1, \pi_2,\dots \pi_h$ such that on any root to leaf path in T, the number
of different paths encountered is $O(\log(n))$.
\end{theorem}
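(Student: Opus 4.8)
The plan is to construct the partition explicitly via the heavy path decomposition and then bound the number of distinct paths on any root-to-leaf path by a subtree-size halving argument. For each vertex $v$, let $s(v)$ denote the number of vertices in the subtree of $T$ rooted at $v$. For every internal vertex $v$, I would designate as its \emph{heavy child} the child $u$ maximizing $s(u)$, breaking ties arbitrarily, and call the edge $(v,u)$ a \emph{heavy edge}; every other edge is a \emph{light edge}. Since each vertex has at most one incident heavy edge going down (to its heavy child) and at most one going up (from its parent, if it is a heavy child), the heavy edges decompose the vertex set of $T$ into a collection of vertex-disjoint paths $\pi_1,\pi_2,\dots,\pi_h$, each being a maximal downward chain of heavy edges. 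This gives the desired partition; it remains to bound the number of distinct paths crossed on a root-to-leaf walk.

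The key observation is that traversing a light edge at least halves the subtree size. Concretely, suppose $(v,u)$ is a light edge, so $u$ is a non-heavy child of $v$. Let $w$ be the heavy child of $v$; by definition $s(w) \ge s(u)$, and since the subtrees of the distinct children of $v$ are disjoint, we have $s(u) + s(w) \le s(v)$, hence $2\,s(u) \le s(u) + s(w) \le s(v)$, giving $s(u) \le s(v)/2$. Therefore, along any root-to-leaf path, each light edge we cross strictly decreases the subtree size to at most half its previous value. As the subtree size starts at $s(\mathrm{root}) = O(n)$ and cannot drop below $1$, the number of light edges encountered on any single root-to-leaf path is at most $\log_2 n + O(1) = O(\log n)$.

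Finally I would translate this edge count into a count of distinct paths. On a root-to-leaf walk, the path label changes only when we step across a light edge: consecutive heavy edges keep us inside the same $\pi_i$. Hence the number of distinct paths $\pi_i$ touched by the walk equals one plus the number of light edges on it, which is $O(\log n)$ by the previous paragraph. This establishes the claimed bound.

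I do not expect any serious obstacle here, as this is the standard heavy path decomposition; the only step requiring care is the halving inequality $s(u) \le s(v)/2$ for light children, which is precisely where the choice of heavy child as the \emph{largest}-subtree child is used. Everything else is bookkeeping about how the heavy edges glue into vertex-disjoint chains and how the path label can change only at light edges.
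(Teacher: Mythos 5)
Your proof is correct and is precisely the standard heavy path decomposition argument that the paper relies on: the paper states this theorem without proof (citing Sleator--Tarjan and Harel--Tarjan), and its own sketch of the construction inside the proof of Lemma~\ref{lem:h} uses the same largest-subtree-child rule. Your light-edge counting argument ($s(u)\le s(v)/2$ across each light edge, hence at most $\log_2 n + O(1)$ path changes on any root-to-leaf walk) is the standard edge-local equivalent of the paper's recursion-depth formulation (remove the heavy path, recurse on forests of at most half the size), so the two rest on exactly the same halving observation.
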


\ignore{
Another way of looking at the problem could be that we are only interested in a certain part of the graph. If we as an example wants to know the number of Italian restaurants in London on a category graph over Europe, we only want to query a specific part of the graph. We try to model this in the following problem definition.

\begin{problem}[Node Problem]\label{node_problem}
Given a colored point set in $\mathbb{R}^d$ and an underlying category graph $G$, where each point is associated with a leaf in $G$ (a node with out degree zero), store the points and $G$ in some datastructure. Given an orthogonal query range and a node $v$ in $G$ output the size of the set of categories within the query range which is also reachable from $v$.
\end{problem}
} 

We study the HCC and SCRC problem in one dimension.
First, we consider when $\G$ is a tree in Section \ref{ancestor} and then we consider the general case where $\G$ can be any (sparse) DAG in Section \ref{sec:DAG}.
The general case is more difficult to solve and we will show that through a reduction (a ``conditional lower bound'').
Our reduction relies on the Orthogonal Vectors conjecture which is implied by
the Strong Exponential Time Hypothesis (SETH)~\cite{Williams-2con}.

\begin{hypothesis}[Orthogonal vectors conjecture]
Given two sets $A$ and $B$ each containing $n$ boolean vectors of dimension
$d=\log^{O(1)}(n)$, deciding whether there exists two orthogonal vectors $a_i \in A$ and $b_j\in B$ 
cannot be done faster than $n^{2-o(1)}$ time.
\end{hypothesis}

Assuming this conjecture, many near optimal time lower bounds have been proven within
$P$, including Edit Distance, Longest Common Subsequence and Fréchet distance~\cite{edit_condi, lcs_condi, frechet_condi}.

Finally, we say that a binary tree $T$ is a generalized caterpillar tree if all the 
degree three vertices lie on the same path (a caterpillar tree is typically defined
as the legs having size 1).

\section{Hierarchical Color Counting on Trees}\label{ancestor}
In the appendix (Section~\ref{app:ashard}), we observe that HCC is at least as hard as
CRC, using a simple reduction. 
As a result, we focus on the 1D case. 
We start off by presenting a data structure to solve 1D HCC on a tree $\G$ and then show that 
unweighted HCC on generalized caterpillars is actually equivalent to weighted 2D dominance counting (up to constant factors). 
This allows us to conclude that HCC on generalized caterpillar graphs cannot be solved with $o(n\log n/\log\log n)$ space and
$o(\log n/\log\log n)$ query time, unless the state-of-the-art on weighted 2D dominance counting can be improved.

\subsection{A Data Structure}
We will now focus on the 1D HCC problem on trees, as described
in Problem~\ref{ancestor_problem}. 
Our main result is the following. 

\begin{theorem}\label{thm:h}
  The HCC problem on a tree (both weighted or unweighted) in $\mathbb{R}$ can be solved using
$S(n)=O(n\log^2(n)/\log \log n)$ space and $Q(n)=O(\log(n)/\log\log n)$ query time.
\end{theorem}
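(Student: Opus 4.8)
The plan is to reduce $1$D HCC on a tree to a single weighted $2$D range counting query over $O(n\log n)$ points, and then invoke the data structure of Jaja et al.~\cite{Jaja.ISAAC04} for weighted $2$D range counting, whose bounds of $O(N\log N/\log\log N)$ space and $O(\log N/\log\log N)$ query time give exactly the claimed $O(n\log^2 n/\log\log n)$ and $O(\log n/\log\log n)$ once $N=\Theta(n\log n)$.

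First I would set up the geometry of the answer. Since an edge points from a sub-category to a super-category, $\reach(\C(p))$ is precisely the set of ancestors of $\C(p)$ in the tree, so $G_\range=\bigcup_{p\in P\cap\range}\reach(\C(p))$ is the upward closure of the colors of the in-range points, i.e.\ a subtree containing the root. I would then apply the heavy path decomposition to $\G$. For a heavy path $\pi=(u_1,\dots,u_k)$ listed from its top vertex $u_1$ (closest to the root) down to its bottom $u_k$, a vertex $u_i$ lies in $G_\range$ iff $\subcat(u_i)$ contains the color of some in-range point. As $i$ decreases the set $\subcat(u_i)$ only grows, so the covered vertices on $\pi$ form a top prefix $u_1,\dots,u_{M_\pi}$. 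Hence, writing $W_\pi(d)=\sum_{i\le d}w(u_i)$ for the prefix weight (with $W_\pi(d)=d$ in the unweighted case), the query output equals $\sum_\pi W_\pi(M_\pi)$.

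Next I would identify $M_\pi$. Each in-range point $p$ whose color $c=\C(p)$ has an ancestor on $\pi$ enters $\pi$ at a well-defined depth $d_\pi(c)$, namely the deepest $u_i$ with $c\in\subcat(u_i)$; then $M_\pi=\max\{d_\pi(\C(p)):p\in P\cap\range,\ \pi\text{ crossed by }p\}$. By the heavy path theorem each root path crosses only $O(\log n)$ heavy paths, so each input point $p$ generates only $O(\log n)$ tokens $(\pi,\,d_\pi(\C(p)),\,x(p))$, for $O(n\log n)$ tokens in total. Computing the answer is therefore exactly a $1$D colored range sum-max instance: the groups are the heavy paths, the selection key of a token is its entry depth, and the reported value of a token at depth $d$ in group $\pi$ is $W_\pi(d)$.

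The last and hardest step is to realize this sum over per-group in-range maxima by a single weighted $2$D range counting query. I would telescope over depth, writing the answer as $\sum_{(\pi,i)}w(u_i)\,[\,M_\pi\ge i\,]$, where $[\,M_\pi\ge i\,]$ is the indicator that some in-range token of group $\pi$ has entry depth at least $i$, i.e.\ an interval-nonemptiness test in the $x$-coordinate over a depth-thresholded subset of tokens. Using a predecessor-based encoding in the $x$-coordinate (in the spirit of the classical $1$D CRC $\to$ $2$D counting transformation, but charging the telescoped increment $w(u_i)$ as the weight of the representative planar point), I would map each token to $O(1)$ weighted points so that the query interval $[q_\ell,q_r]$ becomes a fixed rectangle and the weighted count returns $\sum_\pi W_\pi(M_\pi)$. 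I expect this transformation to be the main obstacle: the subsets thresholded by depth are nested, the query interval is two-sided, and the vertex weights $w(\cdot)$ are arbitrary reals so the prefix sums $W_\pi$ need not be monotone. The heavy path charging (which caps the token count at $O(n\log n)$) together with a careful predecessor/telescoping encoding that attaches each increment at a single entry depth rather than at every threshold is what keeps both the dimension at two and the instance size at $O(n\log n)$; this is precisely the non-trivial content behind the claimed equivalence with weighted $2$D range counting, and once it is in place the stated bounds follow by substituting $N=\Theta(n\log n)$.
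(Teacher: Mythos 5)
Your first half matches the paper's own proof (part (i) of its key lemma) almost exactly: the heavy-path decomposition, the observation that the covered vertices of each heavy path form a top prefix, the entry depth $d_\pi(c)$, the prefix weights $W_\pi(d)$, the $O(\log n)$ tokens per input point, and the resulting sum-max formulation are all precisely the paper's reduction. The genuine gap is in the second half, which is the step you explicitly leave open (``I expect this transformation to be the main obstacle \dots once it is in place the stated bounds follow''). The paper's lemma part (ii) supplies exactly this missing piece: a constant-blow-up reduction from sum-max to weighted 2D range counting. Your telescoped formulation $\sum_{(\pi,i)} w(u_i)\,[M_\pi \ge i]$ does not by itself yield such a reduction: each indicator $[M_\pi \ge i]$ is an interval-nonemptiness test over a different (nested) subset $S_{\pi,i}$ of tokens, and the classical predecessor encoding requires one planar point per pair (token, threshold $i \le d$), since a token's predecessor within $S_{\pi,i}$ changes with $i$. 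That is $\Theta\bigl(\sum_{\text{tokens}} d\bigr)$ points, which is already $\Theta(n^2)$ when $\G$ is a single path; your proposed remedy of ``attaching each increment at a single entry depth'' is precisely the construction that is needed, and none is given.

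The paper resolves this without telescoping. For each sum-max point $p$ with color (group) $c$ and weight $w(p)$, locate the nearest point of the same color and larger weight to its left, $p^{(\ell)}$, and to its right, $p^{(r)}$. Then $p$ realizes the maximum of its color in a query interval $I$ iff $p \in I$ and $I \subset (p^{(\ell)}_x, p^{(r)}_x)$, so one associates to $p$ the single weighted rectangle $(p^{(\ell)}_x, p_x] \times [p_x, p^{(r)}_x)$ and maps the query $I=[I_1,I_2]$ to the stabbing point $(I_1,I_2)$; the stabbed rectangles are exactly those of the per-color maxima inside $I$. Each rectangle then becomes four points with weights $+w(p)$ and $-w(p)$ by the standard inclusion-exclusion trick, turning the whole sum-max instance into a weighted 2D dominance counting instance of linear size; applied to your $O(n\log n)$ tokens this gives the claimed bounds. (Two remarks: defining the neighbors by larger \emph{entry depth} rather than larger weight, while charging the prefix weight $W_\pi(d)$ as the rectangle's weight, decouples the selection key from the charged value, so the non-monotonicity of $W_\pi$ that worries you is not an obstacle; and for HCC with nonnegative vertex weights the two keys coincide anyway.) Without this construction, or a worked-out substitute, your argument does not establish the theorem.
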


We prove the above theorem in two steps, using the following lemma. 
\begin{lemma}\label{lem:h}
    (i) The HCC problem in $\R$ can be reduced to
    $O(\log n)$ sub-problems of the sum-max problem in $\R$ on $n$ points each.
    (ii) A sum-max problem on $n$ points can be reduced to a (weighted) 2D orthogonal
    range counting problem on $n$ points.
\end{lemma}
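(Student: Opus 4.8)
The plan is to apply the heavy path decomposition to the category tree $\G$. Fix a query interval $\range=[a,b]$ on the line. For a vertex $v$, observe that $v\in G_\range$ exactly when some point $p\in P\cap\range$ has its color in the subtree rooted at $v$, i.e. $\C(p)\in\subcat(v)$, since $v\in\reach(\C(p))$ means $v$ is a super-category (ancestor) of $\C(p)$. I would first note that if $\pi=(v_1,\dots,v_k)$ is one heavy path listed from the top (closest to the root) down, then the subtrees rooted at $v_1,\dots,v_k$ are nested, so the set of path vertices landing in $G_\range$ is always a \emph{top-prefix} $\{v_1,\dots,v_{i^*}\}$. Moreover $i^*$ is the deepest point at which the ancestor-paths of the in-range colors \emph{enter} $\pi$: a color $c$ whose ancestor-path crosses $\pi$ enters at some depth $e_\pi(c)$ from the top and activates exactly $v_1,\dots,v_{e_\pi(c)}$, so the union over all in-range colors activates the prefix of length $i^*=\max_c e_\pi(c)$. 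This \emph{maximum} is exactly what lets the sum-max problem enter: treat each heavy path as a ``color'', and for each input point $p$ with $\C(p)=c$ whose ancestor-path crosses $\pi$, create a sum-max point at the position of $p$ with color $\pi$ and weight $e_\pi(c)$. Then $X_\pi(\range)$ equals the number of active vertices on $\pi$, and summing over all heavy paths recovers $|G_\range|$.

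\textbf{Splitting into $O(\log n)$ sub-problems.} A single such instance has $\Theta(n\log n)$ points, so the remaining idea is to partition by \emph{level}: assign to each heavy path its depth in the tree-of-heavy-paths (the path containing the root is level $0$). Levels range over $0,\dots,O(\log n)$, and crucially the ancestor-path of any single color meets \emph{exactly one} heavy path at each successive level on its way to the root, hence at most one per level. Grouping the sum-max points by the level of their heavy path therefore yields $O(\log n)$ instances, each with at most one point per input point, i.e. at most $n$ points; the HCC answer is the sum of the $O(\log n)$ sum-max answers. For the weighted version I would instead use prefix sums along each heavy path, setting the sum-max weight to $w(v_1)+\dots+w(v_{e_\pi(c)})$; when the category weights are nonnegative these prefix sums increase with depth, so the maximum is attained at the deepest entry and equals the active-prefix weight, and the general real-weight case follows by a standard offset (adding a large multiple of the depth and subtracting the unweighted answer).

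\textbf{Part (ii).} For sum-max on $n$ points in $\R$, I would use a ``next-heavier-neighbor'' construction. For each point $p$ of color $c$ at position $x_p$, let $L(p)$ (resp.\ $R(p)$) be the position of the nearest color-$c$ point strictly to the left (resp.\ right) of $p$ whose weight is larger, under a fixed tie-break so that exactly one point of each color attains its in-range maximum. Then $p$ is the maximum-weight color-$c$ point in $\range=[a,b]$ precisely when $L(p)<a\le x_p\le b<R(p)$, so $\sum_c X_c(\range)=\sum_{p:\,L(p)<a\le x_p\le b<R(p)} w_p$. This is a weighted four-sided query over the $n$ points $p$ (one per input point), i.e. a weighted 2D orthogonal range counting query; equivalently, expanding the two straddle conditions $a\in(L(p),x_p]$ and $b\in[x_p,R(p))$ exhibits it as a weighted rectangle-stabbing query, which is equivalent to weighted 2D range/dominance counting by the standard signed-decomposition (subtraction) argument noted earlier.

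\textbf{Main obstacle.} The conceptual crux is Part (i): recognizing that the ``union of ancestor-paths'' defining $G_\range$ collapses, on each heavy path, to a single \emph{maximum} of entry-depths (converting an unweighted distinctness question into sum-max), together with the leveling argument that keeps every sub-problem down to $n$ points. In Part (ii) the only subtlety is that the two-sided nature of the maximum forces a four-sided rather than three-sided query; the weighted/real-weight bookkeeping in Part (i) is routine once the offset trick is in place.
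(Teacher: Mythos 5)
Your proof is correct and follows essentially the same route as the paper: heavy-path decomposition with one sum-max instance per level (heavy paths as colors, entry-depth/prefix-sum weights, using independence of same-level paths to keep each instance at $n$ points), and for part (ii) the same nearest-heavier-neighbor rectangles $(L(p),x_p]\times[x_p,R(p))$ reduced via stabbing and signed decomposition to weighted 2D range counting. In fact you are slightly more careful than the paper in two spots (explicit tie-breaking among equal weights, and the monotonicity/offset issue for prefix sums when category weights may be negative), which the paper glosses over.
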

\begin{proof}
    Our approach starts by looking at the underlying tree structure in $\G$.
    We split $\G$ into its heavy path decomposition.
    To prove part (i) of the lemma, we actually need to look at the specific details of the
    heavy-path decomposition which can be described as follows. 
    Start from the root  of $\G$ and follow a path to a leaf of $\G$  where at every
    node $u$ of $\G$, we always descend to a child of $u$ that has the largest
    subtree; this easily yields the property that after removal of $\pi$, $\G$ will be decomposed
    into a number of forests where each forest is at most half the size of $\G$.
    Then the heavy-path decomposition is built recursively, by recursing on every resulting forest.
    It is easily seen that the depth of the recursion is $O(\log n)$. 

    Let $\paths_i$ be the set of paths obtained at the $i$'th-level of the recursion. 
    An important observation here is that the paths in $\paths_i$ are ``independent'' meaning,
    no vertex $u$ of a path $\pi \in \paths_i$ is a descendent or ancestor of a vertex
    $v$ of a different path $\pi' \in \paths_i$.
    As a result, we claim it is sufficient to solve the HCC problem 
    on the paths $\paths_i$, for every $i=1, \cdots, O(\log n)$, and then sum up the
    $O(\log n)$ results; the set of paths in $\paths_i$ defines the $i$-th
    sub-problem and thus it remains to show how it can be reduced to a sum-max problem. 

    We now build an instance of the sum-max problem on $\paths_i$:
    we use the same input set $P$  but with different colors and also
    the points will receive weights, as follows. 
    For every path in $\paths_i$, we define a new color class, i.e., 
    for the set $C$ in the definition of the sum-max problem we have $|C| = |\paths_i|$.
    Let $c$ be the (original) 
    color of a point $p$ in the HCC problem (i.e., in graph $\G$). 
    Consider the position of the color $c$ in $\G$ and the path $\pi_c$  that connects
    it to the root of $\G$.
    Let $\pi_j \in \paths_i$ be the path that intersects $\pi_c$ (if there's no such path, $p$ is not
    stored in the $i$-th subproblem) on a vertex $v$. 
    The weight of $p$ will be a prefix sum of the weights in $\pi_j$:
    we start from the root of $\pi_j$ and add the weights all the way to $v$.
    Note that an unweighted HCC can be thought of as a weighted instance of HCC with weights one.
    By the definition of the sum-max problem, the answer to a sum-max query will yield
    the number of vertices (or the total weights of the vertices) of the paths
    in $\paths_i$ that need to be counted in
    the HCC problem. 
    This concludes the proof of part (i) of the lemma. 

    To prove part (ii), 
    we transform each point into an orthogonal range in 2D, inspired by the previous
    solutions to 1D color counting. 
    Consider an instance of a sum-max problem where 
    the $x$-coordinate of a point $p$ is $p_x$,   its color is 
    $c(p)$  and  its weight is $w(p)$. 
    For a point $p^{(i)}$, denote the first point of the same color and greater weight to its left (resp. right) 
    with $p^{(\ell)}$ (resp. $p^{(r)}$).
    Observe $p^{(i)}$ is only counted by a sum-max query $I$ if
    we have both $p^{(i)}_x \in I$ and $I\subset (p^{(\ell)},p^{(r)})$.
    Based on this observation, 
    we associate to $p^{(i)}$ the two dimensional region
    $(p^{(\ell)}_x, p^{(i)}_x]\times [p^{(i)}_x,
    p^{(r)}_x)$ (If $p^{(\ell)}$ or $p^{(r)}$ does not exist, make the region
    unbounded in the corresponding direction). We do this transformation
    for all points in our input. For a query range $I=[I_1,I_2]$ we map
    it to the point $q=(I_1,I_2)$; by our observation, $q$ precisely stabs the
    rectangles which correspond to the heaviest points of every color that lies inside $I$.

    Thus, we have reduced the problem to the following: our input is a set
    of axis-aligned rectangles where
    each rectangle is assigned a weight and given a query point $q=(q.x,q.y)$, the goal is to sum up 
    the weights of the rectangles that contain $q$, a.k.a, an instance of the rectangle stabbing problem. 
    By a simple known reduction, this problem reduces to 2D orthogonal range counting:
    simply turn an input rectangle $[x_1,y_1]\times [x_2,y_2]$ where $x_1<x_2, y_1<y_2$ and with weight $w$ 
    into four points: points $(x_1,y_1)$ and $(x_2,y_2)$ with weight $w$ and 
    points $(x_1,y_2)$ and $(x_2,y_1)$ with weight $-w$.
    Computing the answer to the dominance query with point $(q.x,q.y)$ will count $w$ only when
    $q$ is inside the rectangle as otherwise the weights $w$ and $-w$ will cancel each other out. 
\end{proof}

Theorem~\ref{thm:h} follows easily from Lemma~\ref{lem:h} since we only need
$O(\log n)$ sum-max data structures on $O(n)$ points; each reduces to weighted orthogonal
range counting and the final observation is that we can combine all of the $O(\log n)$
data structures in one 2D orthogonal range counting data structure on 
$O(n \log n)$ points.
Using known results, this can be solved with 
$O(n\log^2(n)/\log\log n)$ space and $O(\log(n)/\log\log n)$ query time~\cite{c88} although other trade-offs are also
possible. 
For example, by plugging in other known results for weighted 2D range counting, we can also obtain
$O(n\log n)$ space and $O(\log^{2+\varepsilon}n)$ query time, for any constant $\varepsilon >0$.

\subsection{Lower Bounds and Equivalence}
Here we will look at equivalent problems to the HCC problem in 1D.
Aside from showing that this problem has interesting and non-trivial connections to other problems,
the results in this section imply an $\Omega(\log n/\log\log n)$ query lower bound for our problem
which shows that the query time of our data structure from the previous section is optimal.

\begin{theorem}\label{thm:eq}
    The following problems defined on an input set $P$ of size $n$ are equivalent,
    up to a constant factor blow up in space and query time and potentially an
    additive term in the query time for answering predecessor queries. 
    \begin{itemize}
        \item {[P1]}: Unweighted HCC on  a
            generalized caterpillar of size $O(n)$ in 1D.
        \item {[P2]}: The sum-max problem with $O(\log n)$ bit long integer weights in 1D. 
        \item {[P3]}: 2D orthogonal range counting with $O(\log n)$ bits long integer weights.
    \end{itemize}
\end{theorem}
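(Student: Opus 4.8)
The plan is to establish the three-way equivalence by a cycle of reductions, P1 $\to$ P2 $\to$ P3 together with the reverse directions P3 $\to$ P2 and P2 $\to$ P1, so that each problem reduces to the next with only constant-factor blow-ups in space and query time; the additive predecessor term will appear wherever a query coordinate must be located among the discrete positions created by a construction. Two of these directions I expect to fall out of Lemma~\ref{lem:h} almost immediately. For P1 $\to$ P2, I would run the heavy-path decomposition of part (i), but observe that on a generalized caterpillar the recursion is trivial: the spine is the unique level-one path and the legs are pairwise independent and form the level-two paths, so only $O(1)$ levels arise. Part (i) then produces $O(1)$ sum-max instances rather than $O(\log n)$, and these combine into a single sum-max instance on $O(n)$ points. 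Since the HCC instance is \emph{unweighted}, the point weights manufactured by part (i) are exactly root-to-vertex depths in a tree of size $O(n)$, hence nonnegative integers of $O(\log n)$ bits, which is precisely P2. For P2 $\to$ P3 I would invoke part (ii) verbatim; the four-corner $\pm w$ gadget preserves the $O(\log n)$-bit weight bound, landing exactly in weighted 2D orthogonal range counting.

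For P3 $\to$ P2 I would run part (ii) backwards. Because integer weights permit subtraction, weighted 2D range counting is equivalent to weighted rectangle stabbing with a point query, so it suffices to realize an arbitrary family of weighted rectangles as a sum-max instance. A sum-max query $I=[I_1,I_2]$ already acts like a stabbing query at the point $(I_1,I_2)$ that hits the heaviest-of-its-color rectangle inside $I$; after a monotone coordinate change ensuring $I_1 \le I_2$, I would assign colors and weights to a point set so that the induced ``first heavier same-color neighbor to the left/right'' positions coincide with the prescribed horizontal extents of the given rectangles. Aligning the query coordinates with these constructed positions is what requires a predecessor search, which is the source of the additive predecessor term in the statement.

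The crux, and the surprising part, is P2 $\to$ P1: simulating integer weights by an \emph{unweighted} union count on a caterpillar. I would keep the $x$-coordinates of the sum-max points as the 1D HCC positions, turn each color $c$ into a leg of the caterpillar, and encode a point of color $c$ and weight $w$ by coloring it with the depth-$w$ vertex on leg $c$. Then for a query interval $I$, the union $\bigcup_{p\in I\cap P}\reach(\C(p))$ covers each leg up to the largest weight of color $c$ present, so the legs alone contribute exactly $\sum_c X_c(I)$. Two difficulties must be fought. First, every leg's path to the root shares the spine, so the union of the spine-to-root segments contributes not a per-color sum but the \emph{deepest present attachment}; I would neutralize this by attaching the legs along the spine in order of (say) the rightmost point of each color, so that this overcount equals the attachment index of the rightmost point in $I$ --- a single predecessor query --- which can then be subtracted, again matching the allowed additive predecessor term. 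Second, and harder, keeping the caterpillar of size $O(n)$ while weights are $O(\log n)$-bit forbids a private length-$w$ path per point; here I would exploit that in the instances that actually arise the relevant weights are depths bounded by the structure size and share depth along a common leg per color.

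Composing P1 $\to$ P2 $\to$ P3 with P3 $\to$ P2 and P2 $\to$ P1 closes every reduction and yields the equivalence with only constant-factor blow-ups plus the predecessor term. I expect the main obstacle to be precisely the unweighted-to-weighted tension inside P2 $\to$ P1: an additively merging union count is inherently \emph{max}-like on overlaps, whereas the target is a \emph{sum}, and reconciling the two while respecting the $O(n)$ size bound is where the real work lies. The heavy-path structure of the caterpillar --- a single spine carrying pairwise independent legs --- is exactly what makes this feasible, since it isolates each color on its own leg and confines all unavoidable overlap to the lone spine, whose contribution can be computed and normalized away.
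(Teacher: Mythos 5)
Your cycle P1 $\to$ P2 $\to$ P3 matches the paper, but both of your reverse directions have genuine gaps, and they are exactly where the paper's real work lies. For P3 $\to$ P2, ``running part (ii) backwards'' cannot work as stated: the rectangles that a sum-max instance induces are always anchored on the diagonal --- the rectangle of a point $p^{(i)}$ is $(p^{(\ell)}_x, p^{(i)}_x]\times[p^{(i)}_x, p^{(r)}_x)$, with a corner at $(p^{(i)}_x,p^{(i)}_x)$ --- whereas weighted 2D dominance counting requires stabbing quadrants $[x_i,\infty)\times[y_i,\infty)$ whose corners $(x_i,y_i)$ are arbitrary, and no monotone per-axis reparametrization moves arbitrary corners onto the diagonal. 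The paper's missing idea here is the Larsen--van~Walderveen encoding: map each 2D point to \emph{two} 1D points $-x_i$ and $y_i$, so that dominance $x_i\le q_x \wedge y_i \le q_y$ becomes ``both 1D points lie in $[-q_x,q_y]$,'' and then use two sum-max structures with different colorings (colors $i$ versus colors $2i,2i+1$) whose outputs are subtracted to cancel the points with exactly one copy in the interval. Your proposal contains no mechanism playing this role.

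The second gap is the one you yourself flag in P2 $\to$ P1: a caterpillar realizing a sum-max instance needs roughly $(\#\text{colors})\times(\text{max weight})$ vertices, which for general P2 ($n$ colors, weights up to $n^{\Theta(1)}$) is far more than $O(n)$. The paper never reduces general P2 to P1. Instead, its reduction P3 $\to$ P1 first shrinks both parameters: Claim~\ref{cl:1} bit-slices the weights into $O(1/\varepsilon)$ structures with $\varepsilon\log n$-bit weights (so max weight $n^\varepsilon$), and Claim~\ref{cl:2} applies the grid decomposition of Alstrup et al.\ to cut the input into sub-problems of size $O(n^{2^{-s}})$ (so at most $n^\varepsilon$ colors after the two-point encoding); only then does the caterpillar of $n^\varepsilon$ legs of length $n^\varepsilon$ fit in $O(n)$ vertices. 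Your appeal to ``the instances that actually arise'' is circular, since in your plan those instances come from arbitrary P3 inputs and have no such parameter bounds. Finally, your spine-neutralization scheme is also incorrect: ordering legs by each color's rightmost point does not make the spine overcount equal the attachment index of the rightmost point in $I$ (a color present near $I_1$ may have its global rightmost point, hence a deeper attachment, far beyond $I_2$). The paper handles the spine more simply, by observing that its contribution is a range-max query over attachment depths, itself a special case of sum-max (equivalently, HCC on a path), which can be answered by a separate structure and subtracted.
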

\begin{proof}
    The argument in the previous section shows that 
    P1 reduces to P2  since 
    in a generalized caterpillar, there are only two levels in the heavy-path 
    decomposition so there is no blow up of a $\log n$ factor in the space complexity.
    P2 in 1D reduces to P3 using standard techniques, using the same transformation from 
    1D color counting to 2D range counting~\cite{Coloured_reporting_counting_3sided}.
    The non-trivial direction is to reduce P3 to P1.
    We do this in a step-by-step fashion. 
    \begin{claim}\label{cl:1}
        P3 can be reduced to $O(1/\varepsilon)$ orthogonal range counting problems on $\varepsilon \log n$ bit-long integer weights, 
        for any constant $\varepsilon > 0$. 
    \end{claim}
    \begin{proof}
        Let $X =2^{\varepsilon \log n}$. 
        Given a weighted point set $P$ for P3, store the weights modulo $X$ in a structure
        for $\varepsilon \log n$ bit weights. 
        Now, we can strip away the $\varepsilon \log n$ least significant bits of the original 
        weights and repeat this process $1/\varepsilon$ times to prove the claim.
    \end{proof}

    \begin{claim}\label{cl:2}
        For any constant $s$,
        P3 can be reduced to $O(n^{1-2^{-s}})$ sub-problems of P3 on instances of size
        $O(n^{2^{-s}})$ where a query in the original problem can be reduced to
        $O(1)$ queries on some of the sub-problems. 
    \end{claim}
    \begin{proof}
      We adapt the grid method by Alstrup et al.~\cite{abr00}.
        Observe that as weights are integers, summing up the weights inside
        a query rectangle can be done using additions and subtractions of four dominance
        queries of the form $(-\infty, x_1] \times (-\infty, x_2]$.

        Build a $\sqrt{n} \times \sqrt{n}$ grid such that each row and column 
        contains $\sqrt{n}$ input points. 
        Then, use a $\sqrt{n} \times \sqrt{n}$ table $T$ to store partial sums, as follows:
        the cell $(i,j)$ of $T$ stores the sum of all the weights in grid cells $(i',j')$ with $i' \le i$ and $j'\le j$. 
        After storing $T$, we recurse on the set of points stored in each row as well as each column and stop the recursion 
        at depth $s$. At every sub-problem at depth $s$ of the recursion, we have $O(n^{2^{-s}})$ points left and they
        become the claimed sub-problems in the lemma. 

        To bound the number of sub-problems, observe that if a sub-problem at depth $i$
        has $m$ points, then it creates 
        $2\sqrt{m}$ problems (one for every row and column) involving $\sqrt{m}$ points each in depth $i+1$.
        By unrolling the recursion, we can see that 
        at depth $s$ of the recursion, we will have $2^s n^{1-2^{-s}}= O(n^{1-2^{-s}})$ sub-problems since
        $s$ is a constant, as claimed. 

        Now consider a query $q=(-\infty, q_x] \times (-\infty, q_y]$. 
        Observe that by using two predecessor queries, we can find the grid cells $g$ that contains the query point.
        Assume $g$ corresponds to the cell $(i,j)$  in the table $T$ and consider the cell
        $(i-1,j-1)$. 
        We have stored the sum of all the weights in the cells $(i',j')$ with $i' < i$ and $j' < j$. 
        This value gives us the total sum of the weights in the grid
        cells that are completely contained in $q$.
        Next, $q$ can be decomposed into two two queries, one in a row containing the
        $(q_x,q_y)$ point and another one in the column containing the same point. 
        Furthermore, the two queries can be made disjoint by having the cell $(i,j)$ included
        in only one of them. 
        These queries can then be answered recursively until we reach the $s$-th level
        of the recursion.
        Thus, in total we will need to answer $2^s = O(1)$ queries. 
    \end{proof}

    \begin{claim}
        After performing the reductions in Claims~\ref{cl:1} and \ref{cl:2}, P3 can be reduced to
        an instance of P2 where there are at most $n^{\varepsilon}$ colors and where the 
        maximum weight is at most $n^{\varepsilon}$, for any constant $\varepsilon > 0$.
    \end{claim}
    \begin{proof}
        Pick $s$ in Claim~\ref{cl:2} such that each sub-problem has at most 
        $0.5 n^\varepsilon$ points. 
        Consider one such sub-problem involving $m$ points. 
        We do a reduction inspired by Larsen and Walderveen~\cite{Kasper.Freek13}.
        Consider an input point $p=(x_i,y_i)$ with weight $w(p)$.
        $p$ will be mapped to two points $(-x_i)$ and $(y_i)$.
        They are first stored in a sum-max data structure with weight $w(p)$ and
        color $i$.
        They are also stored in a second sum-max data structure with weights $w(p)$
        but with different colors  of $2i$ and $2i+1$.
        
        Now, given a query range  $q=(-\infty, q_x] \times (-\infty, q_y]$, 
        we query both sum-max data structures with interval $[-q_x, q_y]$ and then
        subtract their results. 
        If the point $p$ is inside $q$, the first data structure counts $w(p)$ once
        but the second data structure counts them twice and thus their subtraction includes
        $w(p)$ only once. 
        If $p$ is not inside $q$, none of the data structures counts $w(p)$ or both counts $w(p)$ and those they cancel out in the subtraction. 

        Finally, note that the total number of colors is at most $2m \le n^\varepsilon$. 
    \end{proof}

    \begin{claim}
        An instance of P2 where there are at most $n^{\varepsilon}$ colors and where the 
        maximum weight is at most $n^{\varepsilon}$, for any constant $0.5 > \varepsilon > 0$,
        can be reduced to P1 on a generalized caterpillar of size $O(n)$.
    \end{claim}
    \begin{proof}
        We build a caterpillar graph $\G$ with a central path of length $n^\varepsilon$.
        Then, we attach a path of length $n^\varepsilon$ to every vertex on the central 
        path; call these attached paths, \textit{legs}. 
        The total size of the caterpillar graph is at most $n^{2\varepsilon} \le n$.

        Consider an input point $p$, with color $i$ and weight $w\le n^{\varepsilon}$. 
        In our HCC problem, we assign it a color that corresponds to the
        $w$-th vertex of the $i$-th leg. 
        Now, observe that given a query $I=[I_1,I_2]$ to the sum-max problem, 
        asking the same query on $\G$  will produce the answer to the sum-max problem: all 
        the vertices on a leg have the same color which is different from the color of all the
        other legs. 
        Furthermore, at every leg we simply need to find its lowest vertex that is contained
        in the query range which is equivalent to finding the point of maximum weight in the
        same color class. Counting the number of vertices on the central path is equivalent 
        to a range max query which is a special case of sum-max queries.
    \end{proof}

    Observe that the proof follows directly using the above claims. Note that at each step,
    we might blow up the query time and the space by a constant factor.
    Also, Claim~\ref{cl:2} requires a constant number of predecessor queries.
    Depending on the assumptions on the coordinates of the points this can take a varying
    amount of time but it is dominated by the actual cost of answering the range counting
    queries in any reasonable model of computation. 
\end{proof}

As a consequence of the above equivalence, we can get a number of conditional lower bounds for
HCC queries on trees. 
\begin{corollary}
    The barrier of $S(n)Q(n) = \Omega(n (\log n/\log\log n)^2)$ for weighted 2D range counting data structure also applies 
  to unweighted HCC queries, even for graphs as simple as generalized caterpillar graphs. 
  The query lower bound of $\Omega(\log n /\log\log n)$ also applies to the HCC problem. 
  Here $S(n)$ and $Q(n)$ refer to the space and query complexities.
\end{corollary}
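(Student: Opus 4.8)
The plan is to derive this Corollary directly from the equivalence established in Theorem~\ref{thm:eq}, which shows that P1 (unweighted HCC on a generalized caterpillar in 1D) and P3 (weighted 2D orthogonal range counting) are interreducible with only a constant-factor blow-up in space and query time, plus a possible additive predecessor term in the query bound. The single fact I would exploit is that such a reduction preserves both the query complexity and the space--time product up to this slack, so any lower bound or barrier known for P3 can be imported to P1 essentially verbatim.

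First I would record the two facts about P3 that we wish to transfer. The query lower bound $\Omega(\log n/\log\log n)$ for weighted 2D range counting with near-linear space is Pătraşcu's result~\cite{patrascu08structures}. The product barrier $S(n)Q(n) = \Omega(n(\log n/\log\log n)^2)$ reflects the current state of the art: the structure of JaJa et al.~\cite{Jaja.ISAAC04} attains space $O(n\log n/\log\log n)$ and query $O(\log n/\log\log n)$, and no known data structure improves this product, so beating it for P3 would be a breakthrough. I would then run the transfer in the contrapositive, using the nontrivial direction of Theorem~\ref{thm:eq}, namely the reduction from P3 to P1. Suppose HCC on a generalized caterpillar admitted a data structure with space $S(n)$ and query $Q(n)$; feeding it through that reduction yields a structure for weighted 2D range counting with space $O(S(n))$ and query $O(Q(n)) + t_{\mathrm{pred}}(n)$, where $t_{\mathrm{pred}}(n)$ is the cost of the constant number of predecessor queries the reduction invokes. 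Hence $Q(n)$ must obey the $\Omega(\log n/\log\log n)$ bound and $S(n)Q(n)$ must respect the $\Omega(n(\log n/\log\log n)^2)$ barrier, for otherwise the corresponding P3 bound would be improved.

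The one point needing care, and the main (if modest) obstacle, is the additive predecessor term $t_{\mathrm{pred}}(n)$. I would dispose of it by observing that in the word-RAM model on a polynomially bounded universe predecessor queries cost $t_{\mathrm{pred}}(n) = O(\log\log n)$ via van Emde Boas, whereas the query lower bound being transferred is $\Omega(\log n/\log\log n) = \omega(\log\log n)$. Thus for any structure attaining the optimal query order we have $O(Q(n)) + t_{\mathrm{pred}}(n) = O(Q(n))$, so the additive term is swallowed by $Q(n)$ itself and neither the $\Omega(\log n/\log\log n)$ query bound nor the $\Omega(n(\log n/\log\log n)^2)$ product barrier is weakened. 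I would close by remarking that the same caterpillar instances produced by the reduction inside Theorem~\ref{thm:eq} witness that these bounds already hold for this restricted graph class, which gives the ``even for graphs as simple as generalized caterpillar graphs'' part of the statement.
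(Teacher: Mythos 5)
Your proposal is correct and follows exactly the route the paper intends: the corollary is stated as an immediate consequence of Theorem~\ref{thm:eq}, and you simply instantiate the P3-to-P1 direction of that equivalence in the contrapositive to import P\u{a}tra\c{s}cu's query lower bound and the space--time product barrier for weighted 2D range counting. Your additional care with the additive predecessor term (noting it is $O(\log\log n)$ and hence dominated by the $\Omega(\log n/\log\log n)$ bound being transferred) is a welcome detail that the paper leaves implicit but does not change the argument.
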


Finally we remark that if the graph $\G$ is a path, then the HCC problem simply reduces to the range max queries
which do have more efficient solutions (e.g., with $O(n)$ space and $O(1)$ query time~\cite{fischer10rmq} plus
$O(1)$ predecessor queries).
And thus, caterpillar graphs are among the simplest graphs on which the above reduction is possible.

\section{General Hierarchical Color Counting Queries}\label{sec:DAG}
Now we shift our attention to the problem where the underlying graph $\G$ 
is a directed acyclic graph (DAG). 
This variant is clearly more complicated than the tree variant. 
However, we observe a very curious behavior, namely, the preprocessing bound is
much higher than the space complexity. 
We show a conditional lower bound on the preprocessing time using a reduction from the orthogonal vectors problem.

\subsection{A Reduction from Orthogonal Vectors}
We will reduce the Orthogonal Vectors problem to the 1D HCC problem.

\begin{theorem}\label{ov_HCC}
Assuming the Orthogonal Vectors conjecture, any solution to the 1D hierarchical color
counting problem on a DAG using $P(n)$ preprocessing time and $Q(n)$ query time, must obey
$P(n)+nQ(n)\geq n^{2-o(1)}$.
\end{theorem}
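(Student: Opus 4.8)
The plan is to reduce the Orthogonal Vectors problem to 1D HCC in the standard data-structure style: preprocess the set $A=\{a_1,\dots,a_n\}$ once, then issue one HCC query per vector $b_j\in B$, so that the whole OV instance is decided in time $P(m)+nQ(m)$ for an HCC instance of size $m$. Since OV requires $n^{2-o(1)}$ time, this forces the claimed trade-off, provided I check that $m=n^{1+o(1)}$ and that I use at most $m$ queries. Here the OV vectors have dimension $d=\log^{O(1)}n$, while the geometric dimension of the HCC instance is $1$.

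The heart of the construction is a three-layer category DAG $\G$ that turns ``share a coordinate'' into ``reachable''. I introduce a vertex $t_j$ for each $b_j$, a vertex $z_k$ for each coordinate $k$, and a vertex $u_i$ for each $a_i$, and add an edge $t_j\to z_k$ whenever $b_j[k]=1$ and an edge $z_k\to u_i$ whenever $a_i[k]=1$. This is acyclic (edges go $t$-layer $\to$ $z$-layer $\to$ $u$-layer), and the crucial property is that $u_i\in\reach(t_j)$ if and only if there is a coordinate $k$ with $b_j[k]=a_i[k]=1$, i.e.\ if and only if $a_i$ and $b_j$ are \emph{not} orthogonal. Hence $\reach(t_j)$ consists of $t_j$ itself, the vertices $z_k$ with $b_j[k]=1$, and exactly the $u_i$ with $a_i\not\perp b_j$, so that $|\reach(t_j)| = 1 + |S(b_j)| + \#\{i : a_i \not\perp b_j\}$, where $S(b_j)=\{k:b_j[k]=1\}$.

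Next I place a single input point $p_j$ with color $\C(p_j)=t_j$ at position $j$ on the line, and answer the OV question for $b_j$ via the HCC query on the tiny interval $\range=[j,j]$. Since this interval contains only $p_j$, the query returns exactly $|\reach(t_j)|$; subtracting the quantities $1$ and $|S(b_j)|$ (both known from the OV input, and computable in $O(nd)$ time overall) yields the number of $a_i$ that are \emph{not} orthogonal to $b_j$. There is an orthogonal pair involving $b_j$ precisely when this number is strictly less than $n$, so scanning the $n$ answers decides OV. This already works for the \emph{unweighted} problem (hence a fortiori for the weighted one); alternatively, assigning weight $1$ to each $u_i$ and weight $0$ to all other vertices makes the weighted query return the count directly.

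What remains is the accounting, which is where I expect to be most careful, together with the one conceptual point. The DAG has $O(nd)$ edges, so to respect the ``$O(n)$ edges'' normalization of the problem I would pad the point set with $\Theta(nd)$ dummy points placed far from the query positions (colored by isolated vertices), making the instance size $m=\Theta(nd)=n^{1+o(1)}$ with $O(m)$ edges; these dummies never fall in a query interval and so affect no answer. I then issue $n\le m$ queries, each costing $Q(m)$, plus $O(nd)=m^{1+o(1)}$ time for construction and post-processing. Under the OV conjecture the total is $\Omega(n^{2-o(1)})$, and since $m=n^{1+o(1)}$ we have $n^{2-o(1)}=m^{2-o(1)}$ and $nQ(m)\le mQ(m)$, giving $P(m)+mQ(m)\ge P(m)+nQ(m)\ge m^{2-o(1)}$, which is the statement after renaming $m$ to $n$. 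The one genuine subtlety is that a single union-of-reachable-sets query must detect orthogonality against \emph{all} of $A$ at once, which at first seems impossible because the union over many points destroys per-pair information; the reachability gadget is exactly what buys this, since reachability from $t_j$ is an OR over shared coordinates and counting the reached $u_i$ converts it into the number of non-orthogonal partners.
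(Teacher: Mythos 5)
Your proposal is correct and follows essentially the same reduction as the paper: a three-layer DAG that encodes ``shares a $1$-coordinate'' as reachability, one singleton-interval query per vector of $B$, and a threshold test on the returned count ($1+|b_j|_1+$ number of non-orthogonal partners). The only differences are cosmetic: you handle the $O(n)$-edge normalization by padding the point set with dummy points, whereas the paper instead shrinks the OV instance to $\eta=n/\log^c n$ vectors so that $O(\eta d)=O(n)$ edges suffice (and your edge orientation, from the $B$-layer toward the $A$-layer, is the one actually consistent with the paper's definition of $\reach$).
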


Note that in the HCC problem, a query time of $O(|\G|)$ is trivial by simple graph traversal methods. 
As a result, the above reduction shows that any non-trivial solution (besides $n^{o(1)}$ 
factor improvements) must have a large preprocessing time. 
\begin{proof}
Let $\eta = \frac{n}{\log^{c}(n)}$ for a large enough constant $c$.
We build an instance of HCC with $\eta$ points, and a DAG $\G$ with $O(\eta)$ vertices
but with $O(n)$ edges. 
We reduce the orthogonal vectors problem on $\eta$ vectors of dimension $\log^{c}n$ 
to this instance of HCC, notice that $n^{2-o(1)} = \eta^{2-o(1)}$.

Given two sets $A$ and $B$ of $\eta$ boolean vectors in $\{0,1\}^d$, we will construct the following DAG
$\G$. $\G$ will have three layers. 
For each vector in $A$ create a vertex (i.e., a category) in what we denote the first layer.
Now for each of the $d$ coordinates of the vectors create a vertex in the second
layer. Lastly create a vertex in the third layer for each vector in $B$. 

Create the following edges: For a vertex corresponding to vector $a_i$ in the
first layer create an outgoing edge to all coordinate vertices in the second
layer in which $a_i$ has a one at that corresponding coordinate. Then, for a
coordinate vertex in the second layer create an outgoing edge to all vectors in
the third layer where the corresponding vector in $B$ has a one at that
coordinate. This clearly takes $O(\eta d) = O(n)$ to construct (see~\cref{fig:ov}).

\begin{figure}[h!]
\centering{
\includegraphics[scale=0.5]{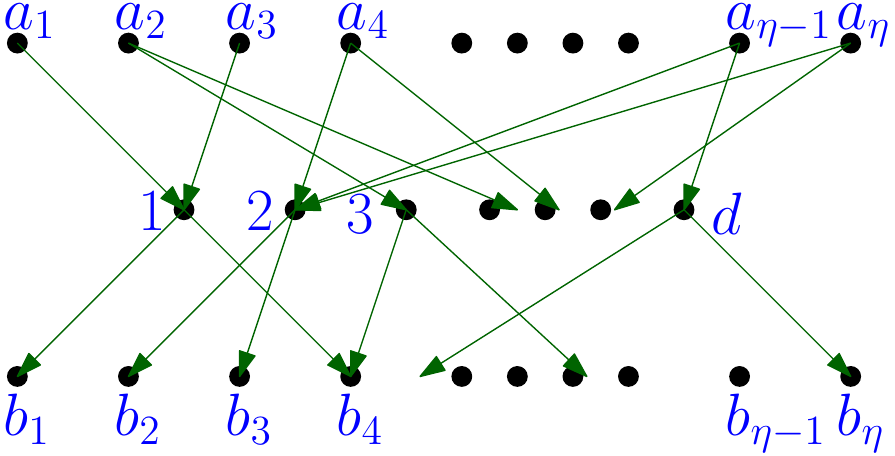}}
\caption{The underlying hierarchy DAG in the Orthogonal Vectors reduction}
\label{fig:ov}
\end{figure}

To figure out whether there exists a vector $a\in A$ and a vector $b\in B$ such
that $a$ and $b$ are orthogonal, we do the following. 
Create a point in $\R$ for each of the vertices in the third layer;
their locations do not matter as long as they are distinct. 
We use $n$ queries by simply querying each point individually and thus each query interval
has just one point inside it!
\begin{claim}
  Consider a HCC query that contains the point $p_i$ that corresponds to a vector
  $b_i \in B$. 
  There is no vector in $A$ that is orthogonal to $b_i$, if and only if 
  the output size
  is $|b_i|_1 +1 + \eta$ where $|b_i|_1$ is the number of ones in vector $b_i$.
\end{claim}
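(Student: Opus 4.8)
The plan is to evaluate $G_\range$ directly for this one-point query and count its elements layer by layer. Since the query interval is chosen to enclose only the single point $p_i$ corresponding to $b_i$, the definition in Problem~\ref{ancestor_problem} gives $G_\range = \reach(\C(p_i)) = \reach(b_i)$, so the whole claim reduces to showing that $|\reach(b_i)| = |b_i|_1 + 1 + \eta$ holds exactly when no vector of $A$ is orthogonal to $b_i$. I would organize the count according to the three layers of $\G$ and handle each separately.

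For the third (the $B$-) layer I would argue that $b_i$ reaches no other $B$-vertex: following the edges of $\G$ out of $b_i$ we first enter the coordinate layer and from there the $A$-layer, and there is no edge leading back into $B$, so $\reach(b_i)$ meets the third layer in $b_i$ alone and contributes the constant $+1$. For the coordinate layer, $b_i$ reaches a coordinate vertex $j$ precisely when the corresponding edge is present, which by construction happens exactly when $b_i$ has a one in coordinate $j$; hence the coordinate vertices of $\reach(b_i)$ are exactly $\{\,j : b_i[j]=1\,\}$, contributing the term $|b_i|_1$.

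The heart of the argument is the first (the $A$-) layer. A vertex $a_k$ lies in $\reach(b_i)$ iff there is a two-step path $b_i \to j \to a_k$ through some coordinate $j$, i.e.\ iff there is a coordinate with $b_i[j]=1$ and $a_k[j]=1$ simultaneously; this is precisely the condition that $a_k$ and $b_i$ share a common one-coordinate, equivalently that $a_k$ is \emph{not} orthogonal to $b_i$. Thus the number of first-layer vertices in $\reach(b_i)$ equals $|\{a_k \in A : a_k \not\perp b_i\}|$, and summing the three layers gives $|\reach(b_i)| = 1 + |b_i|_1 + |\{a_k \in A : a_k \not\perp b_i\}|$.

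I would finish by observing that this first-layer count never exceeds $\eta = |A|$ and attains $\eta$ exactly when every vector of $A$ is non-orthogonal to $b_i$, and drops strictly below $\eta$ as soon as even one $a_k \in A$ is orthogonal to $b_i$; this yields both directions of the equivalence at once. I do not expect a substantive obstacle here, since the proof is a direct reachability computation; the only point requiring care is verifying that the per-layer reachable sets are exactly as described — in particular that no coordinate vertex or $A$-vertex enters $\reach(b_i)$ for a spurious reason, and that $b_i$ cannot reach a second $B$-vertex — which follows cleanly from the three-layer structure of the edges.
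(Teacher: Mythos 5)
Your proof is correct and takes essentially the same approach as the paper's own proof: a direct reachability computation that counts, layer by layer, the vertex $b_i$ itself, its $|b_i|_1$ coordinate out-neighbors, and the $A$-vertices sharing a one-coordinate with $b_i$, so that the total equals $1+|b_i|_1+\eta$ exactly when no vector of $A$ is orthogonal to $b_i$ (the paper argues the two directions of the equivalence separately, but it is the same counting argument). One remark: your argument, like the paper's own proof (which calls the $A$-vertices ``ancestors'' of $b_i$), needs the edges oriented from the $B$-layer through the coordinate layer to the $A$-layer so that these vertices lie in $\reach(b_i)$; the construction paragraph in the paper literally states the opposite orientation, so this is an inconsistency in the paper's write-up rather than a gap in your proof.
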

\begin{proof}
  First, let us consider the case when there is no vector in $A$ that is orthogonal
  to $b_i$. 
  Consider an arbitrary vector $a_j \in A$. 
  Since $a_j$ is not orthogonal to $b_i$, there exists a coordinate $k$ where both $b_i$ and $a_j$ have a 1 at that
  coordinate. 
  This implies that $a_j$ is connected to $b_i$ via the $k$-th vertex in the middle. 
  As a result, all vectors in $A$ are ancestors of $b_i$ and since $b_i$ is connected to $|b_i|_1$ vertices
  in the middle, the output of the HCC query will be as claimed. 
  
  The converse also follows by a similar argument. 
  If a vector $a_j\in A$ does not share a 1 coordinate with $b_i$, 
  then this corresponds to one of the vertices in the top layer not being counted by the
  query, hence the output is less than $|b_i|_1+1+\eta$. 
\end{proof}
Since one can easily store the values $|b_i|_1$ in $O(\eta)$ space, 
  we can solve the Orthogonal Vectors problem using $\eta $ queries on a solution for the HCC on the 
  aforementioned DAG. The DAG has size $O(\eta d) = O(n)$
  and thus we obtain the lower bound $P(n) + n Q(n)  \geq \eta^{2-o(1)} = n^{2-o(1)}$.
\end{proof}

\subsection{A Data Structure for General DAGs for HCC}
Despite the lower bound in the previous section, it is possible to give a non-trivial data structure for
HCC queries on a general DAG, however, our goal is to reduce the space complexity rather than the
preprocessing time.
Surprisingly, this is possible and in fact we can achieve a substantial improvement in space complexity.

\begin{theorem}
    It is possible to solve the HCC problem (unweighted/weighted) on $O(n)$
    points in $\R$ on a DAG $\G$ of size $n$ using $\tilde{O}(n^2)$ preprocessing
    time, $\tilde{O}(n^{3/2})$ space, $O(\log(n))$ query time for unweighted and $O(\log\log(n))$ query time for weighted.
\end{theorem}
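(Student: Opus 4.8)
The plan is to turn HCC into a \emph{hit-counting} problem and then exploit the gap between the generous $\tilde O(n^2)$ preprocessing budget and the tight $\tilde O(n^{3/2})$ space budget. First I would spend the preprocessing computing the transitive closure of $\G$, obtaining for every vertex $v$ the set $P_v=\{p\in P:\ v\in\reach(\C(p))\}$ of points whose color reaches $v$, stored as a sorted list of occurrences; this is $\tilde O(n^2)$ time. The reformulation is that a vertex $v$ contributes to a query interval $[a,b]$ exactly when $P_v\cap[a,b]\neq\emptyset$, so $|G_{[a,b]}|$ (resp.\ its weighted sum) equals the number (resp.\ total weight) of vertices \emph{hit} by $[a,b]$. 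I would also precompute the scalars $N_0=|\{v:P_v\neq\emptyset\}|$ and $W_0=\sum_{v:P_v\neq\emptyset}w(v)$.

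The conceptual obstacle is that this union is \emph{non-additive}: one cannot answer $[a,b]$ by subtracting a prefix count for $(-\infty,a)$ from one for $(-\infty,b]$, since a vertex hit inside $[a,b]$ may also be hit before $a$. This is precisely what makes colored counting hard, and I would handle it with the standard first-occurrence device, charging each hit vertex to the first point of $P_v$ inside the interval. The decisive point for space is that the naive first-occurrence structure stores one object per incidence, and $\sum_v|P_v|$ can be $\Theta(n^2)$; the entire design must therefore store information \emph{per (boundary, vertex)} and \emph{per (block, sub-interval)} rather than per incidence, which is exactly why the space stays at $n^{3/2}$ while preprocessing is quadratic.

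Concretely, I would sort $P$ and cut it into $\sqrt n$ blocks of $\sqrt n$ points. For \emph{small} queries contained in a single block, I would precompute, for each block, the answer for all $O(n)$ sub-intervals indexed by the ranks of their endpoints; this costs $\sqrt n\cdot O(n)=O(n^{3/2})$ space and is answered in $O(1)$ after a predecessor search, in both settings. For a \emph{block-spanning} query I would write $[a,b]=[a,S)\cup[S,b]$, where $S$ is the block boundary just to the right of $a$, and count $=\#\{v\text{ hit by }[S,b]\}+\#\{v\text{ hit by }[a,S)\text{ but not by }[S,b]\}$ (a disjoint, complete split). The aligned part $[S,b]$ is answered from a per-boundary sorted list of the first-occurrence values $f_v(S)=\min\{x\in P_v:x\ge S\}$ with prefix sums: the answer is the count (resp.\ weight) of vertices with $f_v(S)\le b$, found by one predecessor query on $b$ in $O(\log\log n)$, using $O(n^{3/2})$ space over the $\sqrt n$ boundaries.

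The main technical hurdle is the \emph{partial-block correction}: vertices hit by the left remnant $[a,S)$ but not by $[S,b]$, i.e.\ those with last in-block occurrence $g_v\ge a$ and $f_v(S)>b$. This is a two-dimensional dominance query, but with the crucial feature that the $g_v$-axis ranges over only the $\sqrt n$ positions of one block. I would store, per block, the pairs $(g_v,f_v(S))$ for the at most $n$ relevant vertices (so $O(n^{3/2})$ overall) and build a dominance structure over a grid whose first axis is the $\sqrt n$ in-block positions and whose second axis is rank-reduced. Plugging in the appropriate range-counting primitive for this grid is where the stated asymmetry appears: the unweighted count costs $O(\log n)$, while the weighted sum, reducing to a suffix over the $\sqrt n$ positions of rank-space prefix-sum lists resolved by predecessor search on the small axis, costs $O(\log\log n)$. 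Adding the aligned part, the correction, and the constants $N_0,W_0$ yields the query; preprocessing is dominated by the transitive closure and the table filling, both $\tilde O(n^2)$. I expect this correction step to be the delicate part, both for pushing the space down to $O(n^{3/2})$ and for realizing the unweighted/weighted query-time split.
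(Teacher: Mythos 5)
Your overall skeleton matches the paper's: both proofs compute the transitive closure in $\tilde{O}(n^2)$ time to flatten the hierarchy into a hit-counting/colored-counting problem, work in rank space, cut the line into $\Theta(\sqrt{n})$ blocks, and tabulate the $O(n^{3/2})$ small queries. Where you genuinely diverge is the boundary-spanning case. The paper's trick is a \emph{compression} of the flat representation: within each block, for every color (i.e., vertex of the transitive closure) it keeps only the leftmost and rightmost occurrence, giving $O(n^{3/2})$ points in total; since a query of length at least $\sqrt{n}$ that contains some occurrence inside a block must contain that block's prefix up to, or suffix from, that occurrence, the query still sees an extreme occurrence of every hit color, so a \emph{single} off-the-shelf 1D CRC query on the compressed set answers it. You instead split the query at a block boundary $S$ and write the answer as the number of vertices with $f_v(S)\le b$ plus the number with $g_v\ge a$ and $f_v(S)>b$, i.e., a predecessor search in a per-boundary first-occurrence list plus a per-block 2D dominance correction. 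Your decomposition is correct and self-contained (and handles every boundary-crossing query, with no length threshold), whereas the paper's compression outsources the entire spanning case to known CRC structures; in the unweighted case both give $\tilde{O}(n^{3/2})$ space and $O(\log n)$ query.

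Two gaps remain in your version. The serious one is the weighted $O(\log\log n)$ claim for the correction term. That correction is a genuine weighted 2D dominance sum on up to $n$ points per block, and ``rank-space prefix-sum lists resolved by predecessor search on the small axis'' means either storing, for each of the $\sqrt{n}$ in-block positions $a$, a suffix-aggregated sorted list of $\{f_v(S):g_v\ge a\}$ with weight prefix sums --- which costs $\sum_a |\{v:g_v\ge a\}|$, i.e., up to $n\sqrt{n}$ space \emph{per block} and $n^2$ overall --- or aggregating $\Theta(\sqrt{n})$ per-position lists at query time, which costs $\Omega(\sqrt{n})$, not $O(\log\log n)$. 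A partially persistent search structure over the $\sqrt{n}$ positions repairs the space to $\tilde{O}(n^{3/2})$ but yields $O(\log n)$ query time, so an additional idea is needed to realize the claimed weighted bound. (To be fair, the paper also asserts the $O(\log\log n)$ weighted bound without detailed justification, simply invoking weighted CRC structures.) The minor gap: you assert the in-block tables can be filled in $\tilde{O}(n^2)$ time, but the naive incremental fill is $\Theta(n^{5/2})$, since each added point can have $\Theta(n)$ ancestors. The fix is easy --- for each block and each left endpoint, scan every vertex's sorted in-block occurrence list once, giving $O(n)$ per left endpoint and $O(n^2)$ in total, or, as the paper does, build a CRC structure on the $O(n^2)$-point flat representation and query it $O(n^{3/2})$ times --- but it needs to be stated.
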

\begin{proof}
    We start by remarking that we cannot hope to reduce the preprocessing time, as shown by our conditional lower bound, 
    however and rather
    surprisingly, we show that the space can be reduced to $\tilde{O}(n^{3/2})$.

    Assume the input coordinates have been reduced to rank space (i.e., between 1 and $n$). 
    Let $I_q=(i,j)$ be the query interval. 
    Observe that we can afford to store the answer explicitly when $|I_q|\leq \sqrt{n}$ (i.e., ``short'' queries) since
    the number of such queries is $O(n^{3/2})$.
    This allows us to answer such queries in constant time.
    The subtle challenge, however, is to do it within $\tilde{O}(n^2)$ preprocessing time as the obvious
    solution could take much longer. 

    We start by computing the transitive closure $\G^c$ of $\G$, which takes
    $\tilde{O}(n^2)$ time. 
    For a point $p_i$ denote by $c_i$ its color in $\G$ and let $d_i$ be the number of parents of 
    $p_i$ in the transitive closure $\G^c$.
    We create $d_i$ copies of the point $p_i$ at the same position as $p_i$ and assign each a unique parent of $p_i$ as 
    color. 
    The end result will be a set of $O(n^2)$ points such that every point has a unique color and such that computing the number
    of colors in an interval $I=[I_1,I_2]$ will yield the answer to the hierarchical query with the same interval.
    This essentially gives a  ``flat'' representation of the hierarchical color structure, and 
    consequently, using the existing solutions for CRC queries, we can compute the answer to all
    the short queries in $\tilde{O}(n^{3/2})$ time and store the results in a table.
    Thus, short queries can be answered in constant time, using $\tilde{O}(n^{3/2})$ space and $\tilde{O}(n^2)$ preprocessing time. 

    It remains to show how to deal with the long queries.
    Note that we can repeat the above process for all the queries to obtain a ``flat representation'' 
    but doing so will yield a $\tilde{O}(n^2)$ space complexity. 
    The key idea here is that we can ``compress'' the flat representation  down to $O(n^{3/2})$ space, as follows.  
    Keep in mind that the compressed representation only needs to deal with
    queries $I_q$ such that $|I_q|\geq \sqrt{n}$.
    Partition the set of original $n$ points into $2\sqrt{n}$ subsets of $\sqrt{n}/2$ consecutive points. 
    For every subset $P_i$ and every color class $c$ (in the flat representation), delete all the points of color
    $c$ in $P_i$ except for the  points in the smallest and the largest position.
    This will leave at most two points of color $c$ in each subset and thus there will be at most $O(\sqrt{n})$ points of
    color $c$ in all subsets.
    Over $n$ colors this yields $O(n^{3/2})$ points. 
    We store them in a data structure for (weighted or unweighted) CRC queries and this will take
    $\tilde{O}(n^{3/2})$ space.

    The claim is that the compressed representation answers queries correctly.
    Consider a query interval $I_q$ of size at least $\sqrt{n}$ and assume to the contrary that there used to be a
    point $p$ of color $c$ inside $I_q$ in a subset $P_i$  but $p$ got deleted during the compression step. 
    However,  we have kept the rightmost point, $p_1$, and the leftmost point, $p_2$, of color $c$ 
    inside $P_i$. 
    Now, observe that since $P_i$ contains at most $\sqrt{n}/2$ points, either its rightmost point 
    or its leftmost point (or both) must be inside $I_q$. As a result, either $p_1$ or $p_2$ must be inside $I_q$, a contradiction.
    The preprocessing time here is also trivially $\tilde{O}(n^{2})$.
\end{proof}

\section{Sub-category range counting and equivalence}
Here, we will focus on SCRC queries.

\subsection{Equivalences}
We show that when the catalog tree $\G$ is a
path, then the SCRC problem in 1D is equivalent to a number of well-studied
problems, as follows.

\begin{theorem}\label{thm:scrceq}
  The following problems are equivalent:
  (i) SCRC when $\G$ is a single path on a one-dimensional input $P$, 
  (ii)   3-sided distinct coordinate counting (for a planar point set $P$),
  (iii)   3-sided color counting (for a planar point set $P$),
  (iv)  3D dominance color counting (for a 3D point set $P$), and finally
  (v)   3D dominance counting (for a 3D point set $P$).
\end{theorem}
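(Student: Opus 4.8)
The plan is to prove the five-way equivalence by establishing a cycle of reductions, so that each problem reduces to the next and the loop closes. This is exactly the strategy hinted at in the introduction of the excerpt, where the authors write ``we start by observing that (1) and (2) are equivalent,'' ``(2) reduces to (3),'' ``(3) to (4),'' ``(4) to (5) was shown by Saladi,'' and ``(5) in turn reduces to (2).'' So I would organize the proof as a chain $\mathrm{(i)} \leftrightarrow \mathrm{(ii)} \to \mathrm{(iii)} \to \mathrm{(iv)} \to \mathrm{(v)} \to \mathrm{(ii)}$, where the final back-edge closes everything into one equivalence class.

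First I would prove $\mathrm{(i)} \leftrightarrow \mathrm{(ii)}$ directly. When $\G$ is a single path, a color $c$ lies in $\subcat(v_q)$ precisely when $c$ is at or below $v_q$ on the path, i.e., the condition is a threshold on the depth of the color. So an SCRC query $(\range, v_q)$ on a 1D point set asks: how many distinct colors appear in the interval $\range$ whose depth is at most the depth of $v_q$? Encoding each point by its 1D coordinate as the $x$-value and its color's depth as the $y$-value turns this into a 3-sided distinct-coordinate-counting query $[q_\ell, q_r] \times (-\infty, q_t]$ — and the distinctness of $Y$-coordinates corresponds exactly to counting distinct colors, since colors on a path are in bijection with depths. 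The reverse direction reads the same construction backwards. Next, $\mathrm{(ii)} \to \mathrm{(iii)}$ is immediate because 3-sided distinct coordinate counting is \emph{defined} in the excerpt as the special case of 3-sided color counting where point $(x_i, y_i)$ has color $y_i$; this is a reduction by restriction. For $\mathrm{(iii)} \to \mathrm{(iv)}$ I would use the lifting map already stated in the excerpt: send the planar point $(x_i, y_i)$ to the 3D point $(-x_i, y_i, x_i)$ and the 3-sided query $[q_\ell, q_r] \times (-\infty, q_t]$ to the dominance range $(-\infty, -q_\ell] \times (-\infty, q_t] \times (-\infty, q_r]$, preserving colors, so color counting is preserved.

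For $\mathrm{(iv)} \to \mathrm{(v)}$ I would simply cite Saladi~\cite{Saladi_2021}, as the excerpt does, which shows that 3D dominance color counting reduces to plain 3D dominance counting. The one genuinely new link I would have to supply is the closing edge $\mathrm{(v)} \to \mathrm{(ii)}$: reduce uncolored 3D dominance counting back to 3-sided distinct coordinate counting. The idea is that a dominance count over 3D points can be re-expressed as a distinct-value count in a 2D 3-sided window after a suitable sweep/coordinate encoding — intuitively, project out one dimension and encode its information into distinctness, the inverse spirit of the $\mathrm{(i)} \leftrightarrow \mathrm{(ii)}$ construction. I expect \textbf{this back-edge to be the main obstacle}, since converting an ordinary (multiplicity-sensitive) count into a distinct-coordinate count requires ensuring no spurious coincidences inflate or deflate the answer; I would handle this by a rank-space relabeling that makes the relevant coordinates distinct and arranges the dominance structure so that each original point contributes exactly one distinct $Y$-value inside the 3-sided window. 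Finally I would remark, as the excerpt notes, that every reduction above carries weights along unchanged (colors simply carry their weights through the coordinate maps), so the weighted versions are equivalent by the identical chain.
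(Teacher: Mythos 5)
Your overall architecture (the cycle $\mathrm{(i)} \leftrightarrow \mathrm{(ii)} \to \mathrm{(iii)} \to \mathrm{(iv)} \to \mathrm{(v)} \to \mathrm{(ii)}$) and your treatment of the easy edges match the paper exactly: the depth-encoding for $\mathrm{(i)} \leftrightarrow \mathrm{(ii)}$, reduction-by-restriction for $\mathrm{(ii)} \to \mathrm{(iii)}$, the lifting map for $\mathrm{(iii)} \to \mathrm{(iv)}$, and citing Saladi for $\mathrm{(iv)} \to \mathrm{(v)}$. However, the one edge that is genuinely new in this theorem --- $\mathrm{(v)} \to \mathrm{(ii)}$, to which the paper devotes a separate standalone theorem --- is exactly the edge you leave as an unproven intuition, so the proposal has a real gap. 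The difficulty you wave at (``a rank-space relabeling \ldots so that each original point contributes exactly one distinct $Y$-value inside the window'') does not confront the actual obstacle: a 3-sided window $[q_\ell,q_r] \times (-\infty,q_t]$ supplies an interval constraint in $X$ and one upper bound in $Y$, and there is no single placement of a point per input point that makes membership in that window equivalent to the conjunction of the \emph{three} one-sided dominance constraints $x_i \le x$, $y_i \le y$, $z_i \le z$. A point that satisfies two of the three constraints but not the third will, in any naive encoding, still land inside the window or still contribute a distinct coordinate, inflating the count.

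The paper's fix is a differencing trick that your sketch is missing. After reducing to rank space (so all input coordinates are distinct integers, which guarantees the distinct-coordinate counts add up correctly across input points), each 3D point $p_i=(x_i,y_i,z_i)$ spawns \emph{four} planar points split across \emph{two} distinct-coordinate-counting structures: $(-x_i,z_i)$ and $(y_i,z_i)$ in the first, $(-x_i,z_i)$ and $(y_i,z_i+0.5)$ in the second. A dominance query $(x,y,z)$ becomes the single 3-sided range $[-x,y]\times(-\infty,z+0.5]$ asked of both structures, and the answer is the \emph{difference} of the two counts. If $p$ dominates $p_i$, all four points lie in the range; the first structure sees one distinct $Y$-value ($z_i$ twice) while the second sees two ($z_i$ and $z_i+0.5$), so the difference picks up exactly $1$. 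If $p$ fails to dominate $p_i$, whichever constraint fails knocks out matching points from both structures symmetrically, so the contributions cancel to $0$. It is precisely this cancellation between two structures --- not any single-structure encoding --- that converts the three-constraint conjunction into 3-sided queries, and without it your closing edge, and hence the equivalence, does not go through.
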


To prove this, we first show the following lemma:

\begin{lemma}
  When $\G$ is a path, the SCRC problem on a one-dimensional input $P$ is equivalent to the 3-sided distinct coordinate counting problem. 
\end{lemma}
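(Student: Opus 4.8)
The plan is to prove the equivalence by exhibiting a pair of mutually inverse reductions, each incurring only a constant-factor change in space and query time together with an additive cost for a predecessor lookup, exactly as the surrounding \Cref{thm:scrceq} demands. The first step is to fix a convenient orientation of the path. Write $\G$ as $v_1 \to v_2 \to \cdots \to v_m$, so that (since reachability includes the vertex itself) $\subcat(v_j) = \{v_1,\dots,v_j\}$ is precisely the prefix ending at $v_j$. Under this indexing, an SCRC query $(\range, v_j)$ with $\range = [a,b]$ asks for the number of \emph{distinct} color indices in $\{1,\dots,j\}$ that are realized by some input point lying in $[a,b]$.

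For the forward direction (SCRC on a path reduces to 3-sided distinct coordinate counting) I would map each input point $p$ at position $x_p$ with color $\C(p)=v_{\ell_p}$ to the planar point $(x_p,\ell_p)$, whose color in the target instance is its own $y$-coordinate $\ell_p$. The query $([a,b],v_j)$ is sent to the 3-sided range $[a,b]\times(-\infty,j]$. A color index $\ell \le j$ is then counted by the SCRC query exactly when some point of that index lies in $[a,b]$, which is precisely the condition that the value $\ell$ is realized by a planar point inside $[a,b]\times(-\infty,j]$; hence the two outputs coincide. No predecessor search is needed here, since the index $j$ of the query vertex $v_q$ is available directly.

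For the reverse direction (3-sided distinct coordinate counting reduces to SCRC on a path) I would first replace the $y$-coordinates by their ranks among the distinct $y$-values; being a monotone relabeling, this preserves both the set of realized values and all distinctness relations. Letting $y_1<\cdots<y_m$ be the distinct $y$-values, build the path $v_1\to\cdots\to v_m$, send each input point $(x_i,y_i)$ to the one-dimensional point at position $x_i$ with color $v_{\mathrm{rank}(y_i)}$, and send a query $[q_\ell,q_r]\times(-\infty,q_t]$ to the SCRC query $([q_\ell,q_r],v_r)$, where $v_r$ is the largest path vertex with $y_r\le q_t$. The SCRC output then equals the number of distinct $y$-values not exceeding $q_t$ that appear among points with $x\in[q_\ell,q_r]$, which is exactly the 3-sided distinct-coordinate count.

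I do not expect a genuine combinatorial obstacle; the care required is concentrated in two routine checks. First, one must verify that the prefix structure of $\subcat$ on a path matches the downward-closed threshold $y\le q_t$, so that the two counts literally agree rather than merely correspond. Second, one must account for the single $\mathrm{rank}(q_t)$ lookup used in the reverse direction, which is one predecessor query and is precisely the additive predecessor term recorded in the equivalence; every remaining cost is a constant factor. These together establish the claimed equivalence.
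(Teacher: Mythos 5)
Your proposal is correct and follows essentially the same route as the paper: the forward direction maps each point to $(x_p,\ell_p)$ with the prefix index along the path playing the role of the paper's height function $h(c_i)$, and the reverse direction builds one path vertex per distinct $Y$-coordinate and resolves the threshold $q_t$ with a predecessor query, exactly as in the paper's argument. The only cosmetic difference is that you make the prefix structure of $\subcat$ on an oriented path and the rank-space relabeling explicit, which the paper leaves implicit.
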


\begin{proof}
Here, we show that when $\G$ is a path, the SCRC Problem reduces 
to the 3-sided distinct coordinate counting problem.

To do that, first consider a 1D input point set for the SCRC problem.
Map the input point $x_i$ with color $c_i \in \G$, to the point $(x_i, h(c_i))$, where
$h(c_i)$ is the number of nodes below $c_i$ on the path $\G$.
Store the resulting point set in a data structure for the distinct coordinate counting
queries. 
Then, given a query interval $[a,b]$ and query node $c_q$, we create the
3-sided query range $\range = [a,b]\times (-\infty,h(c_q)]$. 
Observe that if a point $(x_i, h(c_q))$ lies inside $\range$, it implies that 
$c_i$ is below $c_q$ and that $x_i \in [a,b]$. 
Thus, the number of distinct $Y$-coordinates inside $\range$ is precisely the answer to the
SCRC problem.

To show the converse reduction, consider
a 2D point set $P$ for the 3-sided distinct coordinate counting. 
We create a node $v(y)$ for each distinct $Y$-coodinate $y$ in the input set,
meaning, $\G$ is a path that has as many vertices as the
number of distinct $Y$-coordinates in $P$. 

For each point $p_i=(x_i, y_i)$, we create a 1D point with x-coordinate $x_i$ and color 
$v(y_i)$. 
Consider a query range $\range=[a,b]\times
(-\infty, c]$ and let $y$ be the predecessor of $c$ among the $Y$-coordinates of $P$. 
We create the 1D range $[a,b]$ and query the node $v(y)$. 
It is straightforward to verify that the answer to the SCRC is exactly the number of
distinct $Y$-coordinates inside $\range$.
\end{proof}

The above lemma shows that (i) and (ii) are equivalent. 
Next, we observe that (ii), (iii), and (iv) all reduces to (v):
(iii) is a generalization of (ii), 
the reduction from (iii) to (iv) is standard
by mapping a 2D input point $(x_i,y_i)$ to the 3D point $(-x_i,y_i,x_i)$ and the 3-sided
query range $\range = [q_\ell, q_r] \times (-\infty, q_t]$ to the 3D dominance
range $(-\infty,-q_\ell]\times (-\infty,q_t] \times   (-\infty,q_x]$.
The reduction from (iv) to (v) was shown 
by Saladi~\cite{Saladi_2021}.
Thus, the only remaining piece of the puzzle is to show a reduction from (v) to (ii). 
We do this next.

\begin{theorem}
3D dominance counting can be solved by 3-sided distinct coordinate counting.
\end{theorem}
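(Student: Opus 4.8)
The plan is to exhibit a reduction that maps an instance of 3D dominance counting to an instance of 3-sided distinct coordinate counting. First I would normalize the input to rank space, so that the three coordinates $a_i,b_i,c_i$ of each point $p_i$ become permutations of $\{1,\dots,n\}$; a dominance query $(q_1,q_2,q_3)$ is then located with a constant number of predecessor searches, which accounts for the additive predecessor term in the statement. The basic building block is the observation already used in the reductions $(ii)\to(iii)\to(iv)$ that the designated color axis of the distinct-coordinate problem, namely its $y$-coordinate, can carry one dominance inequality essentially for free: if the color of a planar point is set to the rank of $c_i$, then the gate $y\le q_t$ realizes $c_i\le q_3$. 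Because the $c_i$ are distinct, each original point receives a \emph{unique} color, so the ``count distinct colors'' semantics of the target problem counts every dominated point exactly once rather than collapsing duplicates; this is what turns a counting problem (count all dominated points) into a distinct-counting problem.

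The remaining task is to make the two-sided $x$-interval $[q_\ell,q_r]$ test the two-dimensional dominance $a_i\le q_1 \wedge b_i\le q_2$. A single planar image of $p_i$ cannot do this, since one $x$-value offers only two thresholds tied to the \emph{same} coordinate, whereas $a_i$ and $b_i$ are independent. I would therefore spread each point into several copies, all sharing its unique color, and lay the copies along a balanced binary search tree built over the $b$-coordinate, reusing the decomposition machinery of Section~\ref{sec:tech}. The plan is to arrange the copies and to define the endpoints $q_\ell$ as a function of $q_2$ and $q_r$ as a function of $q_1$ so that the interval stabs a copy of $p_i$ precisely when $(a_i,b_i)$ lies in the query quadrant; the query $(q_1,q_2,q_3)$ then maps to a constant number of 3-sided boxes whose answers, read off the distinct-coordinate structure, sum to the number of dominated points. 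Since all copies of $p_i$ carry the same color, correctness reduces to the two-sided check that every dominated point contributes at least one stabbed copy while no undominated point does, and that shared colors prevent any point from being counted twice.

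The step I expect to be the main obstacle is exactly this encoding of the residual two-dimensional dominance into interval stabbing. Existence of a stabbed copy is a \emph{disjunctive} test over the copies, whereas the condition I must reproduce, $a_i\le q_1 \wedge b_i\le q_2$, is a \emph{conjunction}; furthermore a single interval on the line can realize only a ``staircase'' family of quadrant corners, so an arbitrary $(a_i,b_i)$ cannot be captured by one copy. Reconciling these, by gating through the unique color and exploiting that the prefix $\{b\le q_2\}$ is a \emph{contiguous} canonical range in the tree layout, is where the real work lies, and the delicate point is to keep both the number of copies per point and the number of query boxes bounded by a constant (rather than the naive logarithmic factor) so that space and query time blow up by only a constant factor. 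I would finish by verifying that the contiguity of the canonical decomposition avoids any logarithmic overhead and that the distinct-color bookkeeping makes the returned count exact, thereby closing the loop $(v)\to(ii)$.
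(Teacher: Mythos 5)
There is a genuine gap: your proposal stalls exactly at the step you yourself flag as ``the main obstacle,'' and the mechanism you hope will resolve it does not. Encoding the residual conjunction $a_i\le q_1 \wedge b_i \le q_2$ by spreading copies of each point along a balanced search tree over the $b$-coordinate runs into the problem that the canonical decomposition of the prefix $\{b\le q_2\}$ consists of $O(\log n)$ \emph{pairwise disjoint} segments of the line; a single 3-sided range is one contiguous interval and cannot cover them, so you are forced into $O(\log n)$ copies per point and $O(\log n)$ queries per dominance query. That logarithmic blowup is precisely what you promise to avoid ``by contiguity,'' but no argument is given, and none is available along this route: the conjunction genuinely cannot be realized by a single stab against a staircase of copies, as you observe. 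So the reduction is not completed.

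The idea you are missing is that one should not try to realize the conjunction inside a single structure at all; instead, build \emph{two} distinct-coordinate structures and \emph{subtract} their answers, using the identity $[A]+[B]-[A\vee B]=[A\wedge B]$. Concretely (this is the paper's construction): after rank-space reduction, map $p_i=(x_i,y_i,z_i)$ to two planar locations with $x$-coordinates $-x_i$ and $y_i$, and map the query $(x,y,z)$ to the 3-sided range $[-x,y]\times(-\infty,z+0.5]$, so that $-x_i\in[-x,y]$ iff $x_i\le x$ and $y_i\in[-x,y]$ iff $y_i\le y$. In the first structure both copies get the \emph{same} $y$-coordinate $z_i$, so (when $z_i\le z$) it contributes $[\,x_i\le x \vee y_i\le y\,]$; in the second structure the copies get \emph{distinct} $y$-coordinates $z_i$ and $z_i+0.5$, so it contributes $[\,x_i\le x\,]+[\,y_i\le y\,]$. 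The difference of the two query answers is exactly $\sum_i [z_i\le z]\,[x_i\le x]\,[y_i\le y]$, the dominance count. This uses four planar points per input point, two structures, and one query each, i.e., the constant-factor, linear reduction you were aiming for. Your use of rank space and of the $y$/color axis to absorb the third inequality matches the paper; the two-structure subtraction trick is the missing core.
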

\begin{proof}
    Consider an instance of 3D dominance counting.
    First, we  reduce the instance to rank space which means we can assume that query coordinates
    are integers and the input coordinates are \textit{distinct} integers between 1 and $n$. 
    For an input point $p_i = (x_i,y_i,z_i)$, we create four points $p_i^1 = (-x_i,z_i),$ $p_i^2 = (y_i, z_i),$ $p_i^3=(-x_i, z_i)$ and $p_i^4 = (y_i, z_i+0.5)$. 
    Next, we create two 2D 3-sided distinct $Y$-coordinate counting structures. 
    In the first structure we put $p_i^1$ and $p_i^2$, and in the second structure we put $p_i^3$ and $p_i^4$. Given a query point $p=(x,y,z)$, we transform it into the 3-sided range
    $r_p = [-x,y]\times (\infty, z+0.5]$.

    We claim the number of dominated points of $p$ is the difference between the outputs of the two data structures on $r_p$. 
    If $p$ dominates $p_i$ then $x_i \leq x$, $y_i\leq y$ and $z_i\leq z$, hence $-x\leq -x_i$ and $z_i+0.5\leq z + 0.5$. This means that all four point are inside $r_p$. The first data structure counts $1$, since the two points share the second coordinate and the second data structure counts $2$ since they have distinct second coordinates. 

    The key observation is that if $p$ does not dominate $p_i$ at least one of $(-x_i, z_i)$ or $(y_i, z_i)$ and $(y_i, z_i+0.5)$ will not be in the range, 
    which consequently implies hence the difference of the outputs will be zero (regarding $p_i$). 
    The reduction is clearly linear.
\end{proof}

\begin{corollary}
    SCRC problem in 1D on category tree can be solved using $O(n\log^2 n/\log\log n)$ space and with the optimal query time of $O(\log n/\log\log n)$.
\end{corollary}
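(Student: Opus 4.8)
The plan is to derive the corollary from Theorem~\ref{thm:scrceq} by reducing SCRC on an arbitrary category tree $\G$ to the single-path case, at the cost of only a logarithmic overhead. Concretely, I would apply a heavy path decomposition to $\G$ and build, for each heavy path $\pi$, one data structure for single-path SCRC --- equivalently, by Theorem~\ref{thm:scrceq}, for $3$-sided color counting. Since a query vertex $v_q$ lies on exactly one heavy path, each query is answered by consulting a single structure; the logarithmic blow-up enters only because each input color is registered in several structures.

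The heart of the argument is a local characterization of the subtree membership $c \in \subcat(v_q)$ on a single heavy path. Fix a heavy path $\pi$ and a query vertex $v_q \in \pi$. For a color $c$ whose root-to-$c$ path meets $\pi$, this path enters $\pi$ at the top of $\pi$ (a light edge always enters a heavy path at its highest vertex, since every other vertex of $\pi$ has its parent inside $\pi$) and leaves it at a well-defined lowest vertex $e_c(\pi)$. Because $v_q$ is an ancestor of $c$ iff $v_q$ lies on the root-to-$c$ path, and $v_q \in \pi$, we get that $c \in \subcat(v_q)$ iff $e_c(\pi)$ is at or below $v_q$ on $\pi$, i.e. iff $\mathrm{pos}_\pi(e_c(\pi)) \ge \mathrm{pos}_\pi(v_q)$, where $\mathrm{pos}_\pi$ increases down the path. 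This turns the ``inside the subtree of $v_q$'' condition into a one-sided threshold on $\pi$, which is exactly what a $3$-sided query captures. Accordingly, for each heavy path $\pi$ I store, for every input point $p_i$ whose color $\C(p_i)$ has its root path crossing $\pi$, the planar point $(x_i, \mathrm{pos}_\pi(e_{\C(p_i)}(\pi)))$ carrying the original color $\C(p_i)$; a query $(\range = [a,b], v_q)$ with $v_q \in \pi^*$ is then answered by a single $3$-sided color-counting query $[a,b]\times[\mathrm{pos}_{\pi^*}(v_q),\infty)$ on the structure of $\pi^*$. By the characterization this counts exactly the distinct colors of $\subcat(v_q)$ having a point in $[a,b]$, i.e. $|C_\range \cap \subcat(v_q)|$.

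For the complexity, each color's root path crosses only $O(\log n)$ heavy paths, so each input point is inserted into $O(\log n)$ structures and the structures together hold $O(n\log n)$ points. Invoking Theorem~\ref{thm:scrceq} with the known bounds for the equivalent $3$-sided problem (a structure on $m$ points in $O(m\log m/\log\log m)$ space and $O(\log m/\log\log m)$ query time) and summing over heavy paths yields $O(n\log^2 n/\log\log n)$ space; since each query touches a single structure, the query time is $O(\log n/\log\log n)$ (plus predecessor searches to locate $v_q$ and the $x$-endpoints, which are dominated). Optimality of the query time is inherited directly: a single path is a (degenerate) category tree, and by Theorem~\ref{thm:scrceq} single-path SCRC is equivalent to problems for which~\cite{patrascu08structures} gives an $\Omega(\log n/\log\log n)$ query lower bound at near-linear space.

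The main obstacle is precisely the local characterization above: the subtree $\subcat(v_q)$ is a two-sided (bounded-interval) region under any global Euler-tour labeling, so a naive reduction would produce a $4$-sided query and lose the equivalence with the problems of Theorem~\ref{thm:scrceq}; it is the heavy path decomposition that splits this into per-path one-sided thresholds. A second subtlety is that distinct descendant colors may share the same exit vertex $e_c(\pi)$, so I cannot reduce to distinct-coordinate counting on $\pi$ (which would undercount) and must genuinely use $3$-sided \emph{color} counting --- fortunately the equivalence shows this variant has the same complexity. Finally, I would verify the $O(\log n)$ crossing bound carefully, since it is what keeps the overhead at a single logarithmic factor and makes the space bound match the claim.
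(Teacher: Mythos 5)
Your proof is correct, and it follows the same skeleton as the paper's: a heavy path decomposition of $\G$, one data structure per heavy path built via the equivalence class of Theorem~\ref{thm:scrceq}, a single query on the unique path containing $v_q$, and an $O(\log n)$ space blow-up coming from the fact that each color's root-to-node path crosses only $O(\log n)$ heavy paths. The difference lies in the key per-path reduction, and there your version is the more careful one. The paper builds, for each heavy path $\pi$, an SCRC instance whose category graph is $\pi$ itself, with each point recolored by a vertex of $\pi$ (in effect, by the exit vertex $e_c(\pi)$ in your notation), and then invokes the equivalence between single-path SCRC and 3D dominance counting. As you observe, such a recoloring conflates distinct colors that share an exit vertex: if many distinct leaf colors hang off the same vertex of $\pi$, a per-path SCRC instance on $\pi$ can never report more than $|\pi|$ colors, while the true answer can be as large as $\Theta(n)$, so the per-path answers would undercount. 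Your construction avoids this by keeping the \emph{original} colors and encoding the condition ``$e_c(\pi)$ at or below $v_q$'' in the $y$-coordinate, i.e., by building a 3-sided color counting instance (item (iii) of Theorem~\ref{thm:scrceq}) rather than a single-path SCRC/distinct-coordinate instance (items (i)--(ii)). Since (iii) lies in the same equivalence class, the per-path bounds of $O(m\log m/\log\log m)$ space and $O(\log m/\log\log m)$ query time still apply, and your accounting of the total space $O(n\log^2 n/\log\log n)$, the single-structure query, and the inherited $\Omega(\log n/\log\log n)$ lower bound~\cite{patrascu08structures} all match the paper. In short: same approach, but your handling of the per-path structure repairs a genuine sloppiness (arguably an error) in the paper's own write-up, and the subtlety you flag about distinct-coordinate counting is exactly the point at which the paper's literal construction breaks.
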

\begin{proof}
    We split the tree into its heavy path decomposition. 
    Once again, we need to look deeper into the details of the
    heavy-path decomposition.
    Start from the root  of $\G$ and follow a path $\pi$ to a leaf of $\G$  where at every
    node $u$ of $\G$, we always descend to a child of $u$ that has the largest subtree. 
    For every node $v \in \pi$, consider the subset $P_v \subset P$ that have a color from the set $\subcat(v)$.
    Build another instance of SCRC problem where the category graph is set to $\pi$, and a point
    $p \in P_v$ is assigned color $v$.
    In this instance of SCRC, the category graph is a path and by Theorem~\ref{thm:scrceq} it is equivalence to 3D dominance counting and thus
    it can be solved with $O(n \log n / \log\log n)$ space and with $O(\log n / \log\log n)$ query time~\cite{Jaja.ISAAC04}. 
    Observe that if for the query pair $(I,v_q)$, consisting of an interval $I$
    and a node $v_q \in \G$, we have $v_q \in \pi$, then the query can readily be answered. 
    Otherwise, $v_q$ must not be on the central path $\pi$.
    To handle such queries, we simply recurse on every tree that remains after removing $\pi$. 

    By the heavy path decomposition, the depth of the recursion is $O(\log n)$. 
    Furthermore, the paths obtained at the depth $i$ of the heavy path decomposition are independent and thus
    in total they contain $O(n)$ points. 
    Over all the $O(\log n)$ levels, it blows up the space by a factor of $O(\log n)$. 
    Note that to answer a query $(I,v_q)$, we need to find the level $i$ of the heavy path decomposition and a path
    $\pi_i$ that contains $v_q$.
    However, this can simply be answered by placing a pointer from $v_q$ to the appropriate data structure. 
    Then, after finding $\pi_i$, the query can be answered using a single dominance range counting query in 
    $O(\log n/\log\log n)$ time. 
\end{proof}
The query time of the above data structure is also optimal which follows from combining our reductions
with previously known lower bounds~\cite{patrascu08structures}.

\subsection{A Conditional Lower Bound for SCRC}
For SCRC where the underlying category graph is a DAG there is a trivial upper bound of $O(n^2)$ space and $O(\log n)$ query time:
for each node $v_i$ in the DAG store a 1D CRC structure on $\subcat(v_i)$. 
To answer a query $(I,v_q)$, we simply query the CRC structure on $v_q$ with the interval $I$. 
The conditional lower bound on HCC can easily be extended to SCRC. 
\begin{restatable}{corollary}{ovSCRC}
Assuming the Orthogonal Vectors conjecture, any solution to the 1D sub-category range counting problem on a DAG using $P(n)$ preprocessing time and $Q(n)$ query time, must obey $P(n)+nQ(n)\geq n^{2-o(1)}$. 
\end{restatable}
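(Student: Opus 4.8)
The plan is to recycle the tripartite DAG from the proof of Theorem~\ref{ov_HCC} almost verbatim, and simply to move the ``isolation'' of a single $B$-vector out of the query range and into the query vertex. Given two sets $A,B$ of $\eta = n/\log^c n$ boolean vectors in $\{0,1\}^d$ with $d=\log^c n$, I would build the same three-layer DAG $\G$: a source vertex for each $a_i \in A$, a middle vertex for each of the $d$ coordinates, and a sink vertex for each $b_j \in B$, with an edge $a_i \to k$ whenever $a_i$ has a $1$ in coordinate $k$ and an edge $k \to b_j$ whenever $b_j$ has a $1$ in coordinate $k$. As in Theorem~\ref{ov_HCC} this DAG has $O(\eta)$ vertices and $O(\eta d)=O(n)$ edges. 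The only change is the placement of points: instead of one point per sink, I would create one point per \emph{source}, placing the $\eta$ points for $a_1,\dots,a_\eta$ at distinct positions $1,\dots,\eta$ and giving the point for $a_i$ the color $a_i$ (its own vertex in $\G$).

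For the queries I would fix the single interval $I=[1,\eta]$ that contains all $\eta$ points, so that $C_\range=\{a_1,\dots,a_\eta\}$ in every query, and let the query vertex range over the sinks: the $j$-th query is the pair $(I,b_j)$. The key observation is that, with the above edge orientation, a source $a_i$ lies in $\subcat(b_j)$ exactly when there is a path $a_i \to k \to b_j$, i.e.\ exactly when $a_i$ and $b_j$ share a $1$-coordinate, i.e.\ exactly when $a_i$ and $b_j$ are \emph{not} orthogonal. Since $C_\range$ contains only source colors, intersecting it with $\subcat(b_j)$ discards the coordinate vertices and $b_j$ itself, so the query answer $|C_\range \cap \subcat(b_j)|$ equals the number of $a_i$ that fail to be orthogonal to $b_j$. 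Hence this answer equals $\eta$ precisely when no vector of $A$ is orthogonal to $b_j$, and is strictly smaller otherwise.

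Running one such query for every $b_j \in B$ therefore decides the Orthogonal Vectors instance: an orthogonal pair exists iff some query returns a value strictly below $\eta$. This uses $P(\eta)$ preprocessing and $\eta$ queries, for a total of $P(\eta)+\eta\,Q(\eta)$ time; since $\eta=n^{1-o(1)}$ and the DAG has size $O(n)$, the Orthogonal Vectors conjecture together with $n^{2-o(1)}=\eta^{2-o(1)}$ yields $P(n)+n\,Q(n)\ge n^{2-o(1)}$, as claimed.

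I expect the only real point requiring care is matching the semantics of $\subcat$ to the distinct-color count: one must check that placing the points on the \emph{source} side (rather than on the sink side, as in the HCC reduction) and letting the query vertex select a sink is exactly what turns ``union of reachable categories'' into ``sub-categories of a fixed vertex,'' and that the sources therefore behave as the sinks did before. Everything else — the $O(n)$ edge bound, the distinctness of the source colors so that $C_\range$ has no collisions, and the bookkeeping relating $\eta$ to $n$ — is identical to the proof of Theorem~\ref{ov_HCC}.
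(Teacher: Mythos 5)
Your proof is correct and takes essentially the same approach as the paper: the same three-layer DAG, a single interval spanning all $\eta$ points, and $\eta$ queries that vary only the query vertex, with ``answer $<\eta$'' signalling an orthogonal pair. The only difference is the mirror-image role assignment---you place points on the $A$-side and query the $B$-vertices, whereas the paper colors the points by the $b_i$ and queries the $a_i$---and your choice is actually the one that matches the paper's stated convention that $\subcat(v)$ consists of the vertices that can \emph{reach} $v$ under the given edge orientation.
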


\begin{proof}
We pick the same underlying graph as in~\cref{ov_HCC} and the point set $P$ of $\eta$ points $p_i$ such that the color of $p_i$ is equivalent to node $b_i$. We query the SCRC $\eta$ times, each time using a different $a_i$ as our query node and an interval which spans all points of $P$. If for a query node $a_i$ the output is less than $\eta$ we know that $a_i$ is orthogonal to some $b_j$, since they do not share a coordinate where they both have a one.
\end{proof}


\bibliographystyle{abbrv}
\bibliography{ref}

\appendix
\section{HCC is at Least as Hard as Color Counting}\label{app:ashard}
Consider an instance of regular colored counting given by 
a point set $P$ in $\mathbb{R}^d$ where each point is assigned a color from a set $C$.
We build two hierarchical color counting data structures. 
In the first data structure, all colors in $C$ are represented by leaf nodes in a balanced binary tree $T_1$;
for simplicity we assume $|C|$ is a power of two; otherwise, we add dummy colors to $C$.
In the second data structure, for every pair of sibling leaves $c$ and $c'$, we ``collapse them'' into their parent $c_p$,
to obtain a second balanced binary tree $T_2$, meaning, 
any point that had color $c$ or $c'$ will receive $c_p$ as its color. 
The resulting colored point set will be stored in the second hierarchical color counting data structure. 

Given a query range $\range$ for the regular colored counting  problem, 
we query both data structures with $\range$ and report their difference.
We claim it will be the correct answer to the regular colored counting query.

To see this, we can consider two sibling leaf colors $c$ and $c'$ and their parent $c_p$ in $T_1$.
Case one is when none of them is in the query range $\range$. 
In this case, neither data structure will count anything and thus
their difference will also not count either color. 
The second case is when exactly one of them, say $c$ is in the query. 
In this case, the first data structure counts the leaf $c$ and the path connecting $c_p$ to the root of $T_1$
where as the second counts only the latter and thus the difference counts $c$ exactly once.
Lastly if both leaves are in the query, the first
data structure counts two more leafs when compared to the second data structure (the
two leaf nodes in the query) and we again get the correct output. The three
cases are summarized in Figure \ref{normal_to_tree}.

\begin{figure}[h!]
  \centering{\includegraphics[scale=1.2]{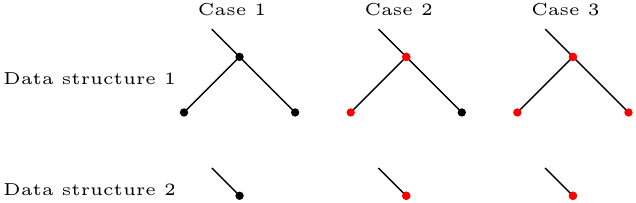}}
\caption{Each case, where the red dots corresponds to points counted}
\label{normal_to_tree}
\end{figure}

\end{document}